\begin{document}
\title{Seamless Parametrization with Arbitrarily Prescribed Cones}

\author{Marcel Campen}
\affiliation{%
  \institution{Osnabr\"uck University}
  \country{Germany}}
\email{campen@uos.de}
\author{Hanxiao Shen}
\affiliation{%
  \institution{New York University}
  \country{USA}}
\author{Jiaran Zhou}
\affiliation{%
  \institution{Shandong University}
  \country{China}}
\author{Denis Zorin}
\affiliation{%
  \institution{New York University}
  \country{USA}}

\newtheorem{prop}{Proposition}[section]

\renewcommand\shortauthors{Campen et al.}

\begin{abstract}
Seamless global parametrization of surfaces is a key operation in geometry processing, e.g. for high-quality quad mesh generation. A common approach is to prescribe the parametric domain structure, in particular the locations of parametrization singularities (cones), and solve a non-convex optimization problem minimizing a distortion measure, with local injectivity imposed through either constraints or barrier terms.
In both cases, an initial valid parametrization is essential to serve as feasible starting point for obtaining 
an optimized solution.
While convexified versions of the constraints eliminate this initialization requirement, they narrow the range of solutions, causing some problem instances that actually do have a solution to become infeasible.

We demonstrate that for arbitrary given sets of topologically admissible parametric cones with prescribed curvature, a global seamless parametrization 
always exists (with the exception of one well-known case). Importantly, our proof is constructive and directly leads to a general algorithm for computing such parametrizations.  Most distinctively, this algorithm is bootstrapped with a convex  optimization problem (solving for a conformal map), in tandem with a simple linear equation system (determining a seamless modification of this map).
This initial map can then serve as valid starting point and be optimized with respect to application specific distortion measures using existing injectivity preserving methods.

\end{abstract}

\ccsdesc[500]{Computing methodologies~Mesh geometry models}

\keywords{Conformal map, local injectivity, cone metric, cutgraph, quad mesh, holonomy}

\begin{teaserfigure}
\vspace{0.1cm}
\centering
\begin{overpic}[width=0.99\linewidth]{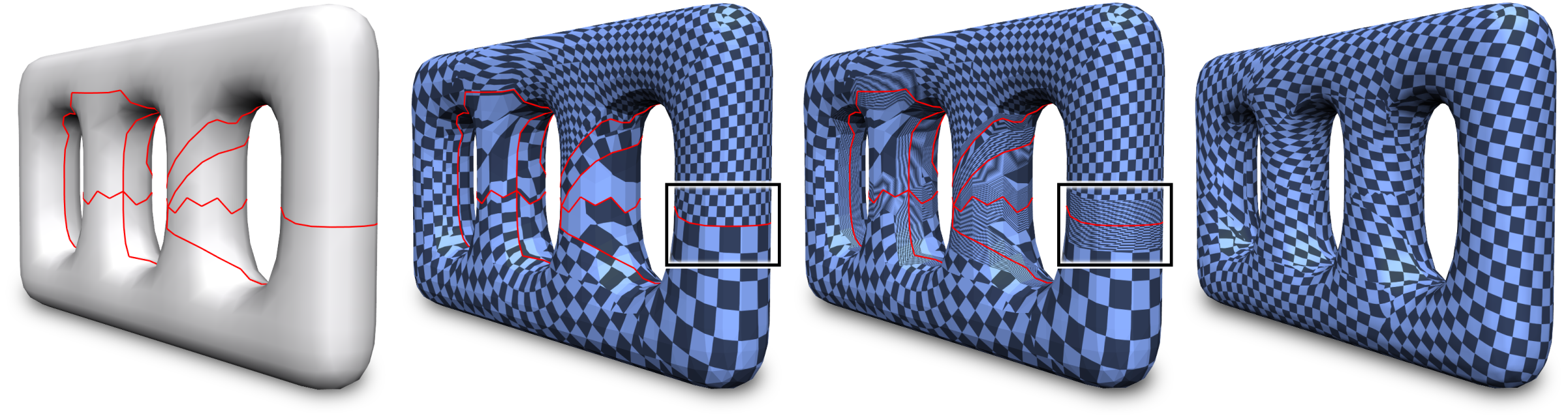}
\put(2,3.5){a)}
\put(27,3.5){b)}
\put(52,3.5){c)}
\put(77,3.5){d)}
\end{overpic}
\vspace{-0.2cm}
\caption{Overview of our method's stages: a) Cutgraph on a surface, consisting of handle loops, connectors, and one additional path. b) Conformal parametrization which maps the cutgraph's branches to axis-aligned straight segments in the parametric domain. This map is only \emph{rotationally} seamless. c) This map modified to be fully seamless by \emph{map padding}; notice that this map, though locally highly distorted, is actually seamless across the red cutgraph. d) The final map optimized for low isometric distortion, starting from the valid map in (c). Zoom-ins that more clearly expose the effect of padding are shown in Figure \ref{fig:teaserzoom}.}
\label{fig:teaser}   
\vspace{0.2cm} 
\end{teaserfigure}

\maketitle

\section{Introduction}
\label{sec:intro}

Computing global parametrizations of surfaces is a key operation in geometry processing.
While in general only disk-like surfaces can be parametrized continuously in a (locally or globally) injective manner, surfaces of arbitrary topology can be dealt with by cutting them to disks.
Across the cuts the parametrization will be discontinuous, limiting its practical value, but this is inevitable in general. 

One can, however, ask the parametric transitions across cuts to be from certain classes rather than arbitrary.
For instance, restricting to similarity transformations with a rotation by some multiple of $\nicefrac{\pi}{2}$ yields global parametrizations ideal for T-spline constructions \cite{Campen:2017:SimilarityMaps}.
Restricting further to \emph{rigid} transformations with such discrete rotation angles yields global parametrizations which (possibly after quantization \cite{Bommes:2013,CampenBK15}) are well suited for tasks like conforming quadrangulation, spline and subdivision fitting, seamless texturing, or constructing grids for solving PDEs on surfaces. This important latter type of parametrizations was termed \emph{seamless} \cite{myles2012global,Purnomo:2004}.

Seamless parametrizations have \emph{singularities}, points around which the total parametric angle is not $2\pi$ but some other multiple of $\nicefrac{\pi}{2}$, i.e.~the parametrization coordinate isolines do not locally form a regular grid. Equivalently, the metric induced by the parametrization has \emph{cones}, points where the metric is not flat, its curvature not zero but some other multiple of $\nicefrac{\pi}{2}$. Intuitively, in a quadrangulation induced by the parametrization, these singularities or cones correspond to extraordinary vertices, with valence different from~4.

As implied by the Gauss-Bonnet theorem, the total curvature of these cones is a topological invariant -- i.e.~such cones, which have a significant influence on the quality and structure of the parametrization, cannot generally be avoided.
Depending on the use case, they can be considered either an impairment or features of special interest. In either case, having the ability to control (i.e.~prescribe) them -- where they are, how many there are, what curvature they have -- is of obvious benefit. 
An important task thus is:

\vspace{0.2cm}
\emph{Compute a global seamless injective parametrization with cones exactly as prescribed (in accordance with Gauss-Bonnet).}
\vspace{0.2cm}

Probably closest to a general reliable solution to this problem is an approach by Myles et al.~\shortcite{Myles:2014}: a valid global seamless injective parametrization is guaranteed, cone preservation is aimed for but not guaranteed -- in some cases
unnecessary additional cones arise.

That these are truly unnecessary follows from the fact that the above task is actually feasible:
the \emph{existence} of such parametrizations 
follows from a theorem on existence of quad meshes with prescribed extraordinary vertices \cite{jucovivc1973theorem}. The proof is relatively complex and purely combinatorial, thus does not translate into a practical parametrization construction.

In this paper, we provide a \emph{constructive} proof for the existence of seamless surface parametrizations, that is conceptually simpler and translates to a parametrization algorithm. 
Precisely, we show:

\begin{theorem}
\label{th:main}
Given a closed surface $M$ of genus $g$ and an admissible set $C$ of cones $c_i$,
each given by a point $p_i\in M$ with a prescribed curvature value $\hat\Theta_i = (4-k_i) \frac{\pi}{2}$, $k_i\!\in\!\mathbb{N}^{>1}$, there exists a global parametrization of $M$ with cones $C$ that has seamless transitions.\footnote{We assume $k_i > 1$ for brevity, as cones with curvature $3\nicefrac{\pi}{2}$, corresponding to valence~1 vertices in a quadrangulation, 
are of low relevance in common applications; with some additional special case handling, our method could be extended to $k_i = 1$.}\\
\end{theorem}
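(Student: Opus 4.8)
The plan is to reduce the existence claim to the explicit construction of a locally injective developing map whose transition functions lie in the group generated by $\nicefrac{\pi}{2}$-rotations and translations (equivalently, a flat cone metric on $M$ whose holonomy consists of rigid motions), with cone points exactly the $p_i$ and cone angles $k_i\nicefrac{\pi}{2}=2\pi-\hat\Theta_i$. I would obtain this in the three stages mirrored by the overview figure: first fix a cutgraph $G$ reducing $M$ to a disk; then produce, by a \emph{convex} problem, a conformal map whose transitions are already rotations by multiples of $\nicefrac{\pi}{2}$ but may still scale (only ``rotationally seamless''); and finally remove the residual scalings by a controlled local modification (``padding'') to obtain genuinely rigid, i.e.\ seamless, transitions.

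For the cutgraph I would take $2g$ loops realizing a handle basis of $\pi_1(M)$, a tree of \emph{connectors} attaching every prescribed cone $p_i$ to this system, and one further path, chosen so that the complement $M_G$ is a single topological disk carrying all cones on its boundary. On $M_G$ I then solve a convex problem for a flat conformal metric, laid out in the plane so that every branch of $G$ maps to an axis-aligned straight segment and the prescribed turning $k_i\nicefrac{\pi}{2}$ is realized at each $p_i$. Axis-alignment forces the similarity identifying the two copies of any branch to have a rotational part that is a multiple of $\nicefrac{\pi}{2}$; summing turning around an arbitrary loop then shows every holonomy rotation is a multiple of $\nicefrac{\pi}{2}$, and the angle realized at each $p_i$ is exactly $k_i\nicefrac{\pi}{2}$. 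Admissibility of $C$ (Gauss--Bonnet together with topological realizability) is what makes this convex problem feasible, and the single excluded configuration is the one in which it is not. The resulting transitions have the form $z\mapsto \lambda_e\, i^{\,r_e} z + t_e$ with $r_e\in\{0,1,2,3\}$, but with scale factors $\lambda_e$ generally $\neq 1$: the two sides of a branch receive unequal lengths, so the structure is only a similarity structure, not yet a genuine flat metric.

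To remove the scalings I would insert, along each branch $e$, a thin \emph{padding} strip that lengthens the shorter of the two identified sides until both lengths agree, converting the transition across $e$ into a rigid motion ($\lambda_e\to 1$). Since padding one branch changes the side lengths of the branches sharing a cone endpoint, the widths cannot be chosen independently: the requirement that all lengths close up simultaneously around every cone and around every cycle of $G$ yields one scalar equation per such constraint, hence a linear system in the strip widths whose structure is fixed by the incidence pattern of $G$. I expect this to be the \textbf{main obstacle}: proving that the system always admits a strictly positive, geometrically realizable solution, and that the padded layout is locally injective. Solvability I would establish from the homological structure of $G$, arguing that the handle loops and the extra path supply enough free width parameters to meet all cone- and cycle-closure conditions; local injectivity I would secure by keeping each strip simple and, if needed, uniformly enlarging a particular solution, which preserves rigidity of all transitions while removing overlaps. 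Running this construction yields the advertised algorithm, with the distortion-reducing optimization (stage d) an optional postprocess from the now-valid seamless map.
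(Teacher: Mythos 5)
Your high-level architecture matches the paper's: cut the surface, solve a convex problem for a flat conformal metric whose boundary is rectilinear and axis-aligned (hence rotationally seamless transitions), then pad strips to kill the residual scale jumps. But the two steps you defer are exactly where the paper's proof lives, and your specific choices would make them fail. First, the cutgraph. You cut to a \emph{single} disk with all cones on its boundary. The paper shows (and its illustrative example states explicitly) that for a single-disk cutgraph the length-equalization system $A\bm{w}=\bm{b}$ in general has \emph{no} non-negative solution; the whole point of the ``extra path'' is that it \emph{splits} the surface into \emph{two} components whose corner counts are not divisible by $4$ --- that non-divisibility is what makes the relevant circulant blocks invertible --- and the hole-chain numbering must additionally satisfy an ``odd-couple'' condition, used to repair the one equation that is deliberately violated in order to zero out the extra path's padding variables (which cannot be shifted to non-negativity by adding a constant). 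Genus $1$ and $2$ require separate hand-checked patterns, and the excluded case $\bm{k}=(3,5)$ surfaces in this combinatorial stage (no cone subset has the curvature subsum the genus-$1$ pattern needs), not in the conformal stage as you suggest. Your appeal to ``the homological structure of $G$ supplying enough free width parameters'' is not a proof, and for your cutgraph the claim is false. Relatedly, attaching every cone to the cutgraph before computing the metric would riddle the padding boundary with corners of angle $k_i\nicefrac{\pi}{2}\neq\nicefrac{\pi}{2}$; the paper keeps the cones interior to $M'$, computes the rectilinear metric there, and only afterwards adds the trees $T$ to the cones, across which transitions are automatically rigid because the metric is flat there and the cone angles are multiples of $\nicefrac{\pi}{2}$.

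Second, equalizing lengths does not by itself make transitions rigid. Your assertion that the transitions of the conformal layout have the form $z\mapsto\lambda_e\,i^{r_e}z+t_e$ with a single scale $\lambda_e$ per branch is unjustified: the conformal factor varies along a branch, so the two developments of a branch are related by a \emph{pointwise-varying} similarity, and equal total lengths only force the scale jump to be $1$ on average. The paper needs a second family of maps (the ``shift'' maps $r_j$) that reparametrize each segment to constant speed inside a thin strip, making the scale jump constant --- hence, after equalization, identically $1$ --- pointwise along each branch. Without both ingredients (an equalizable cutgraph with an actual feasibility proof, and the shift step) the argument does not close.
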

\vspace{-0.3cm}

The terms used in the theorem are defined precisely in Sec.~\ref{sec:construction}.

A set of cones $C = \{(p_i,k_i)\}$ is called \emph{admissible} if it satisfies $\sum_{i}(1- \frac{1}{4}k_i) \!=\! 2-2g$ (Gauss-Bonnet) and if $\bm{k} \neq (3,5)$ (which is the single one notorious infeasible case \cite{jucovivc1973theorem}).

\begin{figure}[t]
\centering
\begin{overpic}[width=0.99\linewidth]{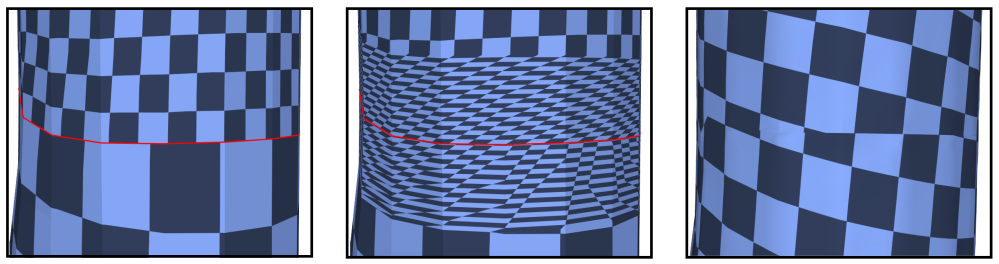}
\end{overpic}
\vspace{-0.15cm}
\caption{Zoom-ins of Figure \ref{fig:teaser}. Left: cut-aligned conformal map. Middle: padded map, with high distortion, but seamless and locally injective. Right: map optimized for low isometric distortion.}
\label{fig:teaserzoom}    
\vspace{-0.35cm}
\end{figure}

\subsection*{Basic Idea}
\label{sec:basicidea}

Instead of directly aiming for a seamless cone metric on a surface~$M$,
\begin{enumerate}
\item we cut $M$ open using a cutgraph $G$, obtaining the cut surface $M'$ consisting of one or more topological disks, 
\item ask for a cone metric on $M'$ -- \emph{without} any seamlessness requirements, but with prescribed boundary curvature; concretely, we prescribe a \emph{rectilinear} boundary, consisting of geodesically straight segments meeting at right angles,
\item modify this metric into a seamless one on $M$, yielding a seamless parametrization with exactly the prescribed cones; exploiting the rectilinear boundary property, this modification is performed by \emph{padding} the straight segments in the parametric domain with rectangles of suitably chosen size.
\end{enumerate}
The metric in (2) is known to exist; e.g.~a \emph{conformal} metric with prescribed cones and boundary curvature (satisfying Gauss-Bonnet) on a disk always exists -- in the smooth setting (cf.\ Sec.~\ref{sec:conemetrix}); the situation is more complicated in the discrete setting (cf.~Sec.~\ref{sec:62}). Figures~\ref{fig:teaser} and~\ref{fig:teaserzoom} show the outcome of these main steps. We refer to Appendix \ref{sec:illexample} for a comprehensive example illustrating these steps in a concrete, simple case.

\subsection*{Key Contributions}

Our key technical contributions pertain to step (3) of the above outline:
\begin{itemize}
\item We propose a technique (\emph{map padding}) to modify the non-seamless metric into a seamless one on $M$.
\item We prove that, for certain choices of cutgraph combinatorics, this technique is always applicable and succeeds.
\item We describe an implementation of this construction for the discrete, piecewise linear case.
\end{itemize}

So, in essence, our approach turns on a problem reduction:

\vspace{0.2cm}
\hrule
\vspace{0.13cm}
 {If one is able to compute a metric with prescribed cones and prescribed boundary curvature for \emph{disk-topology surfaces}, this solves (by means of our technique) the more general problem of computing a global seamless parameterization with prescribed cones for \emph{arbitrary-topology surfaces.}
 \vspace{0.13cm}
\hrule
 \vspace{0.05cm}
 
  \vspace{0.17cm}
 Our algorithm can thus be used to obtain non-degenerate, locally injective, seamless parametrizations of arbitrary closed discrete surfaces (triangle meshes) with arbitrary cones, assuming the initial metric with prescribed boundary curvature can be obtained.

\section{Related Work}
\label{sec:related}
Seamless surface parametrization and the related subject of quadrangulation and quad layout generation is a well-explored topic.
A relatively recent survey \cite{bommes2013quad} has references to many works in this 
area. We focus here on the most closely related ones.

In a wide variety of applications, surface parametrizations are required to be (locally) injective (i.e.~without fold-overs) as well as to exhibit low parametric distortion \cite{floater2005spt}.
Due to the challenging nature of this requirement, a common strategy is to proceed in a two-step fashion:
first construct an initial injective parametrization (without specific attention to distortion), then optimize it with respect to application specific distortion criteria (while preserving injectivity). 
Our work likewise follows this strategy.

\paragraph{Constructing injective maps} Interestingly, whenever a robust process is desired, injective maps are almost always initialized using one classical result on convex harmonic maps \cite{Tutte,Floater1997} (essentially a discrete version of the Rad\'o-Kneser-Choquet theorem).
In its original form,
it handles surfaces with disk topology and does not support cones.
Some recent results \cite{GORTLER200683,Aigerman:2015,Aigerman:2016:HOT,Bright:2017:HGP} elegantly generalize the idea to other settings, but either not to arbitrary sets of cones, not to arbitrary topology, not using the piecewise linear Euclidean setting, or without similar guarantees on map existence. 

\paragraph{Injectivity-preserving optimization} A variety of techniques have been presented for distortion optimization, e.g.\ \cite{Schueller:LIM:2013,Hormann:2000:MAE,Rabinovich:2017:SLI,Kovalsky:2016:AQP,Zhu:2018:BCQ,Liu:2018:PP,Shtengel:2017:GOV}. Through line search techniques, barrier functions, and similar techniques they are able to guarantee preservation of injectivity -- if initialized with an injective starting point.
State-of-the-art techniques can handle large meshes efficiently and tolerate significant 
imperfections in the initial solution. 

\paragraph{Seamless parametrization}
A number of methods have been described for the construction of seamless parametrizations with prescribed cones \cite{kalberer2007qsp,Bommes:2009,Bommes:2013,myles2012global,myles2013controlled,esck2016,Bright:2017:HGP,Fu:2015:CLI,Chien:2016:BDP}.
Interestingly, but not surprisingly, they do not follow the above two step principle -- as no general method for the first step (valid initialization) is known for the arbitrary-topology arbitrary-cones setting. 
Instead, they are typically based on optimization subject to non-convex constraints and, despite long development and practical importance, no concise sufficient conditions for success are known.
The key issue is that there is no available way  to construct an initial solution, and one cannot guarantee that the solver will itself find a way into the feasible region.

Only for certain special cases there are known solutions in this regard, e.g.,\ for specific genus or specific cones \cite{Aigerman:2015,Gu:2003}, using more general non-piecewise-linear parametrization \cite{Aigerman:2016:HOT}, or requiring additional input \cite{tong2006dqd}. Particular challenges are caused by the fact that the given surface discretization may not even admit a (elementwise linear) solution, i.e.\ systematic remeshing capabilities are required in any approach that is supposed to be reliable.

\paragraph{Quadrangulation}

The problem of surface quadrangulation with conforming elements and prescribed extraordinary vertices is closely related -- state-of-the-art methods actually construct quadrangulations via seamless parametrization \cite{bommes2013quad}.
\cite{jucovivc1973theorem} investigate the question of existence of such quadrangulations. The result is purely combinatorial and does not yield a surface parametrization. On an abstract level, we adapt some of the general ideas in this work as foundation of our approach to modify non-seamless into seamless parametrizations through map padding.

In this context of quadrangulation, our strategy of transitioning from an initial non-seamless parametrization to a seamless one is, in a sense, similar to modifying a non-conforming quadrangulation into a conforming one.
This has been tackled by simple subdivision or more involved T-mesh simplification techniques \cite{Myles:2014} -- however, at the expense of not always preserving the prescribed extraordinary vertices. Our modification technique, by contrast, always preserves exactly the prescribed cones.

\paragraph{Cone selection}
Regarding the (application specific) question of which cones to prescribe, common approaches are based on considering surface curvature (e.g.\ via cross fields \cite{Vaxman:FieldsSTAR}), distortion reduction \cite{Kharevych:2006:DCM,Soliman:2018:OCS,BenChen:2008,egsh2017}, or on manual design and editing of, e.g., quad mesh or quad layout structure \cite{esck2016}.

 \paragraph{General holonomy prescription.} 
\cite{Campen:2017:SimilarityMaps} address a related problem, showing that for any admissible \emph{holonomy signature} one can construct (also via conformal maps) a \emph{seamless similarity} map adequate for constructing T-splines.
A holonomy signature, in addition to prescribed cone angles, includes turning angles around homology loops.
In contrast, we use a stronger notion of seamlessness, not allowing scale jumps across cuts, while not controlling global turning angles around homology loops (cf.~Sec.~\ref{sec:conclusion}) -- however, they are of the form $k\nicefrac{\pi}{2}$ (for \emph{some} $k$) by our construction.

\section{Seamless Parametrization Construction}
\label{sec:construction}

First, we define the seamless parametrizations we aim to construct as set out by Theorem~\ref{th:main}, as well as a weaker notion of
a \emph{rotationally seamless} parametrization we need as an intermediate step.

  Suppose a smooth surface $M$ is cut to a set of topological disks $M^c_i$ by a cutgraph $G$, i.e. a collection of smooth curves (\emph{branches}) $\gamma_j$ embeddeded in $M$ meeting only at their endpoints (\emph{nodes}).  We call the resulting cut surface  $M^c$; the boundary of $M^c$ consists of curves  $\gamma^c_j$ (boundary curves).  There is a canonical map $\pi: M^c \rightarrow M$, which is identity in the interior of $M^c$ and maps exactly two boundary curves $\gamma^c_j$ to each branch $\gamma_j$ on $M$.  Pairs of boundary curves mapping to the same branch $\gamma_j$ are called \emph{mates}, and boundary points where curves $\gamma^c_j$ meet are called \emph{joints}.
    The image of any joint under $\pi$ is a node.
Pairs of non-joint points $p, q \in \partial M^c$ with $\pi(p) = \pi(q)$ are called \emph{mated points}.
For a boundary point $p$ let $t_p  \in T_pM^c$ denote a unit vector that is tangent to the boundary $\partial M^c$ at $p$.

\begin{definition}[Rotationally Seamless Parametrization]
A continuous, locally injective map $F: M^c \rightarrow \mathbb{R}^2$ is called \emph{rotationally seamless} parametrization of $M$, if
for any pair $p$, $q$ of mated points, the images $dF_p(t_p)$ and $dF_q(t_q)$ of boundary tangents
are related by a similarity transformation $T_{pq}$, i.e. $T_{pq}\!\circ\! dF_p(t_p) = dF_q(t_q)$, with a rotation angle that is a multiple of $\nicefrac{\pi}{2}$
and constant per branch.
\end{definition}  
Such a rotationally seamless parametrization does, in general, have a (pointwise) \emph{scale jump} across the cut (cf.~Fig.~\ref{fig:rotseam} left)
-- unless the similarity $T_{pq}$ is actually just a rotation everywhere:

\begin{definition}[Seamless Parametrization]
 A map $F: M^c \rightarrow \mathbb{R}^2$ is called a \emph{seamless} parametrization of $M$, if
 it is rotationally seamless and for each pair of mated points $p$, $q$ the transition $T_{pq}$ is rigid, i.e.
it is a rotation with a rotation angle that is a multiple of $\nicefrac{\pi}{2}$.
\end{definition}

\begin{figure}[b]
\centering
\begin{overpic}[width=0.99\columnwidth]{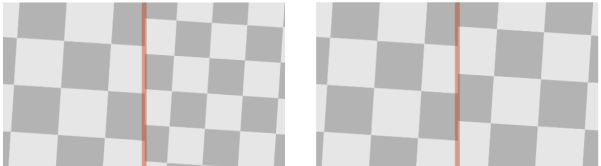}
\end{overpic}
\vspace{-0.3cm}
\caption{Visualization of a parametrization on a surface near a cut branch~(red). Left: rotationally seamless. Right: seamless.}
\label{fig:rotseam}    
\vspace{-0.0cm}
\end{figure}  

Notice that seamlessness implies that the images $F(\gamma^c_j)$ and $F(\gamma^c_k)$ of mates $\gamma^c_j$ and $\gamma^c_k$ are congruent.

A seamless parametrization induces a metric on the surface $M$ which is flat except at the nodes, where it may be singular; it may have a \emph{cone}. We say that a seamless parametrization has a cone with angle $\alpha$ at a node $p$, if the sum of parametric angles at all joints $q$ in $M^c$ with $\pi(q) = p$ is equal to $\alpha$. This cone has curvature $\Theta = 2\pi - \alpha$.

\paragraph{Overall Approach} We first construct a parametrization $F$ that is rotationally seamless, using a specific type of (conformal) maps: maps with rectilinear boundary, i.e. with the image of the boundary of the cut surface consisting of straight segments meeting at right angles. Then this parametrization is modified near the boundary to make the scale jump vanish so as to make it into a seamless parametrization $F^s$. This is done using a process we call \emph{map padding}.
Key to our construction is cutting the surface into \emph{two} (in special cases three or four) topological disks, using a particularly structured cutgraph. This
is critical for our method of converting rotationally seamless parametrizations into seamless parametrizations.

\subsection{Cutting to Disk(s)}
\label{sec:cg}
We construct the graph $G$ needed to define our  parametrizations in two steps, first cutting the surface $M$ into a set of topological disks $M'_i$, forming a surface $M'$. Typically we use two disks, with some exceptions for special genus 2 cases.
$M'$ contains all cones in the interior. The final cut surface $M^c$ is obtained by adding and cutting along additional branches passing through all cones, without splitting the disks $M_i'$. This second step is explained in Section~\ref{sec:mtop}.

For the first step, we consider a particular type of cutgraphs that only have nodes of degree 4 and 3.
Pairs of cyclically sequential branches  around nodes form sectors: four at degree 4 nodes, three
\begin{wrapfigure}{r}{0.25\linewidth}
  \vspace{-0.35cm}
  \hspace{-0.75cm}
    \begin{overpic}[width=1.3\linewidth]{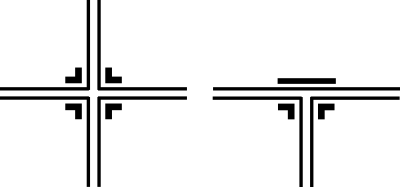}

       \put(70,31){\small flat}

    \end{overpic}
  \vspace{-0.8cm}
\end{wrapfigure}
at degree 3 nodes.
At degree 4 nodes, all four sectors are marked as \emph{corners} (cf.~Fig.~\ref{fig:cutcorners}).
At degree 3 nodes, two sectors are marked as corners, the third one is referred to as \emph{flat}. We refer to degree 3 nodes as \emph{T-nodes}.
  
  We denote the boundary curves of $M'$ by  $\gamma'_j$. 
  Any pair of sequential boundary curves of $M'$ corresponds to a corner or a flat joint.
 As we will require boundary curves to be straight and corners to be right-angled under a certain metric in the following,
 the number $n_i$ of corners on the boundary of each connected component $M'_i$ needs to match the total prescribed cone curvature in the interior of $M'_i$ as per Gauss-Bonnet, i.e.,
\begin{equation}
\label{eq:gaussdisk}
 n_i \frac{\pi}{2}  +\!\!\!\!\! \sum_{(p_j,\hat\Theta_j)\in\, C_i'} \!\!\!\!\!\!\hat\Theta_j = 2\pi,
\end{equation}
where $C_i'\subseteq C$ is the subset of cones prescribed within $M'_i$.
Note that this is equivalent to $n_i = 4+ \sum_{(p_j,\hat\Theta_j)\in\, C_i'}(k_j-4)$.

\begin{definition}[Admissible Cutgraph]
\label{def:admcut}
 A cutgraph with corner marking is \emph{admissible}, if
  \begin{itemize}
  \item all branches are embedded smooth curves meeting transversally at nodes of degree 3 or 4, and not
    passing through cones; 
  \item it partitions the surface into disk-topology components;
  \item the number of corners of each component satisfies  Eq.~\eqref{eq:gaussdisk};
  \item if a boundary curve is involved in a flat joint, its mate is not.
  \end{itemize}
\end{definition}

\begin{figure}[t]
\centering
\begin{overpic}[width=0.83\columnwidth]{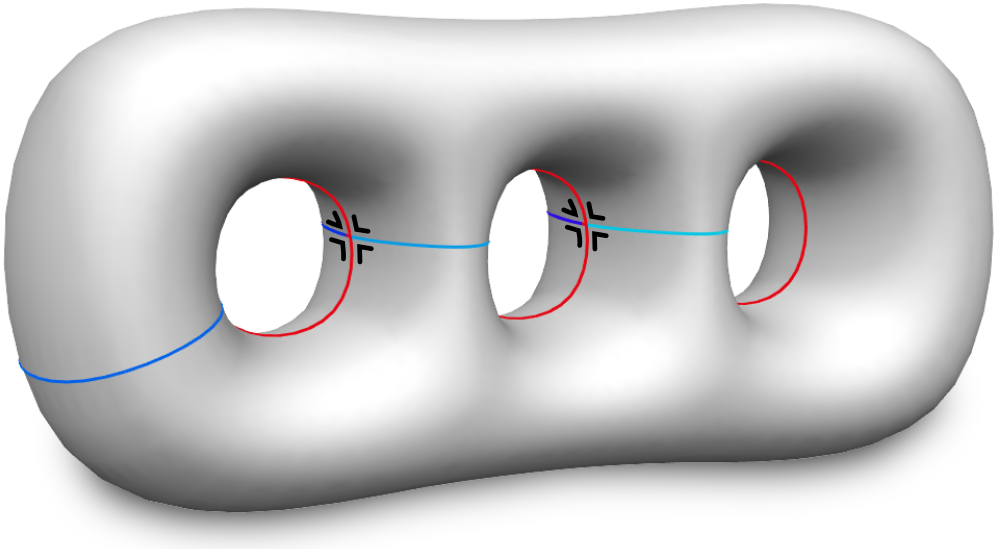}
\end{overpic}
\vspace{-0.3cm}
\caption{Degree 4 cutgraph on a surface of genus $g = 3$. This cutgraph has $10$ branches and $5$ degree 4 nodes, thus 20 corners (marked black). The cutgraph consists of loops (red) and connectors (shades of blue) (cf.~Sec.~\ref{sec:holechain})}
\label{fig:cutcorners}    
\vspace{-0.2cm}
\end{figure}  

{\subsection{Cone Metric with Rectilinear Boundary} \label{sec:conemetrix}

  Corners partition the boundary $\partial M'$ into \emph{segments}. Note that a segment may contain flat joints, thus consist of several boundary curves $\gamma'_j$ (\emph{complex segment}). 
  (All cutgraphs we will be working with contain at most two T-nodes, thus two flat joints, i.e. almost all segments are simple segments.)

  We now require a cone metric on $M'$ which has a \emph{rectilinear} boundary: under such a metric segments are geodesically straight (i.e. there is zero geodesic boundary curvature along $\partial M'$ in the interior of segments), and sequential segments form right inner angles of~$\nicefrac{\pi}{2}$.

\begin{prop}
\label{prop:conemetric}
On a cut surface $M'$, obtained from a smooth surface $M$ by cutting it along an admissible cutgraph $G$, there is a cone metric with rectilinear boundary  and prescribed admissible cones $C = \{(p_i,\hat\Theta_i)\}$.
\end{prop}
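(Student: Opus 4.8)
The plan is to prove Proposition~\ref{prop:conemetric} one component at a time: since an admissible cutgraph partitions $M$ into disk-topology pieces $M'_i$ (Definition~\ref{def:admcut}), and a metric is a local object, it suffices to construct the desired cone metric on each disk $M'_i$ separately, carrying its prescribed interior cones $C'_i$ and its $n_i$ right-angled corners. On a fixed component I would choose an arbitrary smooth background metric $g_0$ (for instance the one induced by $M\hookrightarrow\mathbb{R}^3$, restricted to $M'_i$) and seek the target metric within the conformal class $g = e^{2u}g_0$. The essential simplification is that we prescribe \emph{zero} Gaussian curvature away from the cones: the Liouville-type curvature equation, nonlinear in general, then degenerates to a \emph{linear} Poisson problem for the conformal factor $u$.

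Concretely, I would read off the boundary value problem from the conformal transformation laws $K_g = e^{-2u}(K_{g_0} - \Delta_{g_0} u)$ in the interior and $k_g = e^{-u}(k_{g_0} + \partial_n u)$ along the boundary. Demanding $K_g \equiv 0$ away from cones and $k_g \equiv 0$ in the interior of segments yields
\begin{equation*}
\Delta_{g_0} u = K_{g_0}\ \text{ in the interior}, \qquad \partial_n u = -\,k_{g_0}\ \text{ on the segments},
\end{equation*}
to which I would append the prescribed singular data: at each cone point $p_j$ a logarithmic singularity $u \sim -\tfrac{\hat\Theta_j}{2\pi}\log r$ creating a cone of curvature $\hat\Theta_j$, and at each corner an analogous concentrated contribution forcing the interior angle to $\nicefrac{\pi}{2}$. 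Equivalently these enter as distributional sources $\hat\Theta_j\,\delta_{p_j}$ in the interior equation and as matching corner terms in the boundary condition.

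The crux is solvability of this linear Neumann problem. By the Fredholm alternative, $\Delta_{g_0} u = f$ with $\partial_n u = h$ admits a solution, unique up to an additive constant (i.e. up to a global rescaling of $g$), precisely when the sources balance, $\int_{M'_i} f\,dA_0 = \int_{\partial M'_i} h\,ds$. Substituting the source terms above and invoking Gauss--Bonnet for the background metric $g_0$, this integrability condition collapses exactly to
\begin{equation*}
\sum_{(p_j,\hat\Theta_j)\in C'_i}\hat\Theta_j \;+\; n_i\tfrac{\pi}{2} \;=\; 2\pi,
\end{equation*}
which is the admissibility requirement~\eqref{eq:gaussdisk}. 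Hence the problem is solvable on each component, and $g = e^{2u}g_0$ is the desired cone metric with rectilinear boundary; note that a flat joint inside a complex segment contributes interior angle $\pi$ and therefore adds nothing to the corner count, consistent with this bookkeeping.

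I expect the main technical obstacle to lie not in the integrability bookkeeping but in the rigorous treatment of the singularities. Both the interior cone points and, more delicately, the boundary corners induce non-smooth (logarithmic / fractional-power) behavior in $u$, so the clean Fredholm statement must be justified in suitable weighted Sobolev or H\"older spaces, typically by subtracting explicit local model solutions near each $p_j$ and each corner and solving a regularized residual problem to which standard elliptic theory applies. This is precisely where the smooth setting is comparatively benign; as the paper notes, the discrete piecewise-linear analogue (cf.~Sec.~\ref{sec:62}) is considerably more subtle and must be handled separately.
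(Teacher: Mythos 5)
Your proposal is correct and follows essentially the same route as the paper's proof: reduce the curvature prescription to a linear Poisson/Neumann problem for the conformal factor, handle the cones by explicit logarithmic singular terms, and observe that the Fredholm/divergence-theorem compatibility condition is exactly the admissibility count of Eq.~\eqref{eq:gaussdisk} via Gauss--Bonnet. The only (cosmetic) difference is that the paper first conformally flattens an expanded copy of $M'_i$ to a planar domain so that the background curvature terms $K_{g_0}$, $k_{g_0}$ vanish and only the delta sources remain, whereas you keep the ambient background metric; likewise the paper, like you, defers the non-right-angle corner case to additional concentrated corner terms (citing \cite{Bunin:2008}) and settles regularity at the convex corners via \cite{grisvard1985elliptic}.
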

The proposition is proven in Sec.~\ref{sec:troyaproof}. In particular, a \emph{conformal} cone metric with these properties exists; conformality, however, is not essential in the following.

Note that the metric angle on $M$ around points on $G$ is $2\pi$ everywhere: points in the interior of branches are surrounded by two sectors with angles $\pi+\pi$, node points are surrounded by three sectors with angles $\pi+2\frac{\pi}{2}$ or four sectors with angles $4\frac{\pi}{2}$. As we preserve these angles in the following, this implies that no spurious cones emerge.

\begin{figure}[b]
\centering
\begin{overpic}[width=0.99\columnwidth]{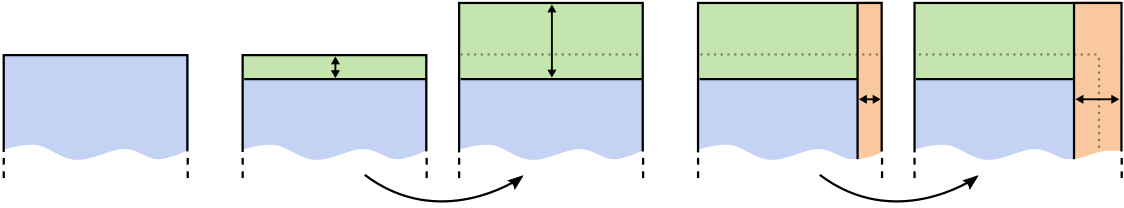}
\put(8,-1){\small a)}
\put(29,-1){\small b)}
\put(48,-1){\small c)}
\put(69.5,-1){\small d)}
\put(89.5,-1){\small e)}
\end{overpic}
\vspace{-0.1cm}
\caption{a) Generic local view of the boundary of map $F(M^c)$, with straight segments and right-angled corners. b) A rectangular strip along a segment is marked. c) The strip is stretched outwards, effectively increasing the length of the two  adjacent segments left and right of the central segment. d) This padding operation can be applied in sequence to further segments.
}
\label{fig:padrect}    
\end{figure}

{\subsection{Metric to Rotationally Seamless Parametrization} 
  \label{sec:mtop} The cone metric with rectilinear boundary is flat away from the cones on $M'$.
  We now extend the cutgraph $G$ by a set of trees $T_i$, yielding the extended cutgraph $G^T = G \cup T$,
   where $T$ is the union of trees $T_i$. 
   The tree $T_i$ is rooted on $\partial M'_i$ at a single non-joint point, and its branches connect all cones prescribed within $M_i'$.
  Let $M^c_i$ be the surface obtained by cutting $M'_i$ along $T_i$.
  For distinction, the boundary curves of $M^c$ are denoted $\gamma_j'$ if they map to branches of $G$, or $\gamma_j^T$ if they map to branches of $T$.
   The above constructed cone metric is flat in the interior of $M^c_i$ (as the cones lie on $\partial M^c$), and 
   defines (via integration) a map $F_i: M^c_i \rightarrow \mathbb{R}^2$. It is unique up to a rigid transformation, which we choose such that all segments' images are axis-aligned in $\mathbb{R}^2$ -- which is possible because they (due to rectilinearity) are all straight and meet at right angles.
   Together, these maps $F_i$ define a global parametrization $F$ of $M$.

\begin{prop}
\label{prop:rotseam}
The  map $F$ is a rotationally seamless parametrization of $M$ (but not, in general, seamless).
\end{prop}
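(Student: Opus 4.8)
The plan is to check the two clauses of the definition in turn: that $F$ is continuous and locally injective, and that across every branch the transition relating boundary tangents is a similarity whose rotational part is a multiple of $\nicefrac{\pi}{2}$ and constant along the branch. The starting point is that each $F_i$ is, by construction, obtained by integrating a metric that is flat throughout the open disk $M^c_i$, the cones having been moved onto $\partial M^c_i$ by the tree cut $T_i$. Hence $F_i$ is a local isometry from $(M^c_i,g)$ into the plane, in particular an orientation-preserving local diffeomorphism on the interior; since $M^c$ is the disjoint union of the $M^c_i$ and $F$ restricts to $F_i$ on each, continuity and local injectivity of $F$ follow, modulo the behavior at boundary nodes and cones discussed below.

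I would then split the mated pairs according to the two kinds of boundary curve in $\partial M^c$: those $\gamma'_j$ mapping to branches of the cutgraph $G$, and those $\gamma^T_j$ mapping to branches of the trees $T$. For a mated pair $p,q$ on a cutgraph branch, both points lie on rectilinear segments, which by the chosen rigid normalization of each $F_i$ are developed to axis-aligned straight segments in $\mathbb{R}^2$. The tangent images $dF_p(t_p)$ and $dF_q(t_q)$ therefore point along axis directions, so the unique similarity $T_{pq}$ carrying one to the other has rotational part equal to the angle between two axis directions, i.e.\ a multiple of $\nicefrac{\pi}{2}$; and because both image segments are straight, this angle does not vary as $p,q$ sweep along the branch, giving the required constancy per branch. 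The scaling part of $T_{pq}$, by contrast, need not be $1$, since mated segments may be assigned different lengths by $g$ (so the local stretch of $g$ against the background metric differs at $p$ and $q$); this is exactly the admissible pointwise scale jump, and the reason $F$ is in general only rotationally, and not fully, seamless.

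The remaining, and I expect more delicate, case is a mated pair $p,q$ on a tree branch $\gamma^T_j$, whose two sides lie in the same component $M^c_i$. Here $g$ is continuous across the cut, as the tree is an artificial cut interior to the disk $M'_i$ and away from cones, so the transition relating the boundary tangents $dF_p(t_p)$ and $dF_q(t_q)$ is the linear part of the holonomy of the flat cone metric along a loop in $M'_i$ crossing the branch once. Being the linear part of a holonomy of a flat metric, it is a pure rotation (scale $1$); and since such a loop encircles a fixed set of cones, this rotation equals the sum of the enclosed cone curvatures --- each a multiple of $\nicefrac{\pi}{2}$ by hypothesis --- hence a multiple of $\nicefrac{\pi}{2}$ and independent of the chosen point on the branch. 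Across tree branches $F$ is thus in fact seamless, a fortiori rotationally seamless, and combining the two cases yields the property on all of $\partial M^c$. The main obstacle I anticipate lies in making this holonomy argument precise --- pinning down the loop whose holonomy realizes $T_{pq}$ and confirming via Gauss-Bonnet that its rotation is exactly the enclosed curvature sum --- together with the careful verification of local injectivity at boundary nodes and cone points, where one uses that the metric angle around points of $G$ on $M$ is $2\pi$ to rule out fold-overs.
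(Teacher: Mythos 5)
Your proposal is correct and follows essentially the same route as the paper's proof: split mated pairs into those on cutgraph branches (where axis-alignment of the rectilinear segment images gives a rotation by a multiple of $\nicefrac{\pi}{2}$, constant per branch, with a possible scale jump) and those on tree branches (where flatness of the metric across $T$ gives congruent images whose relative rotation is a multiple of $\nicefrac{\pi}{2}$ because the cone angles are). Your elaboration of the tree case via the holonomy of a loop enclosing the cut-off cones is just a more explicit version of the paper's one-line appeal to the prescribed cone angles.
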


\begin{proof}
Due to all segments' images being axis-aligned, the angle between the images of any two mated boundary curves $\gamma'_j$, $\gamma'_k$ is some multiple of $\nicefrac{\pi}{2}$, constant per branch.
The images of any two mated boundary curves $\gamma^T_j$, $\gamma^T_k$ are congruent (thus in particular similar) as the metric is flat on $T$ by construction, and the rotation between them is a multiple of $\nicefrac{\pi}{2}$ because the prescribed angles at cones are multiples of $\nicefrac{\pi}{2}$ (cf., e.g., \cite{Springborn:2008}). Hence, $F$ is seamless on $T$ but, in general, only rotationally seamless on~$G$.
\end{proof}

\label{sec:vis}
      {\paragraph{Visualization} For purposes of illustration, we would like to visualize the image $F(M^c)$. Due to global overlaps implied by negative curvature cones,
      this is not an easy task. However, locally, near the cutgraph $G$, $F(M^c)$ always looks like in Fig. \ref{fig:padrect}a -- because the boundary consists exclusively of straight segments meeting at right angled corners (the only exception being the one boundary curve per $M_i$ where the tree $T_i$ is rooted). We use this type of illustration when a local view is sufficient. An alternative is to flatten the surface to the plane without cutting to the cones (using the trees $T$), instead (for visualization purposes) pushing the curvature of the cones evenly onto the boundary $\partial M'$.  This leads to a flattening of $M'_i$ as shown in Fig.~\ref{fig:padABC2}a, where straight boundary segments appear as curved arcs  (and cones are not visible). This makes it possible to visualize the complete rectilinear boundary without cuts
        or overlaps \label{sec:vis}}

\begin{figure}[bt]
\centering
\begin{overpic}[width=0.99\columnwidth]{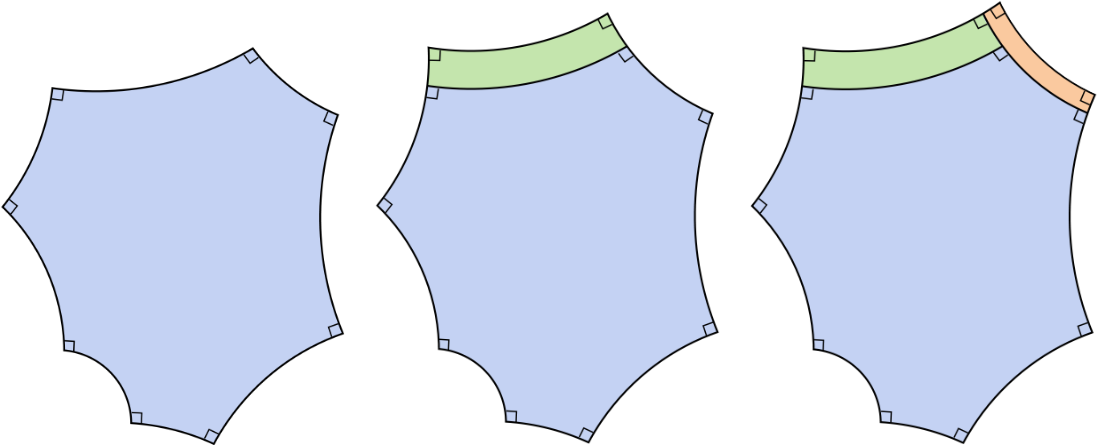}
\put(4,0){\small a)}
\put(38.5,0){\small b)}
\put(72,0){\small c)}

\put(4.5,25){\small 0}
\put(14.5,29.6){\small 1}
\put(24,30){\small 2}
\put(26.5,19.5){\small 3}
\put(22,6.9){\small 4}
\put(15.7,2.8){\small 5}
\put(11,7){\small 6}
\put(5.9,15.5){\small 7}
\end{overpic}
\vspace{-0.15cm}
\caption{a) Global visualization (without cuts to cones) of the rectilinear map, where straight segments appear as curved arcs (as explained in~Sec.~\ref{sec:vis}).
b) Padding (analogous to Fig.~\ref{fig:padrect}) of segment 1, increasing the lengths of segments 0 and 2. 
c) Padding of segment 2, increasing the lengths of segments 1 and 3. This can be continued to adjust all segments' lengths.
}
\label{fig:padABC2}    
\vspace{-0.25cm}
\end{figure}  

\subsection{Seamless Parametrization by Padding} 
\label{sec:pad1}

The rotationally seamless map $F$ falls short of being seamless on two levels:
the images of mated segments may have different lengths, which implies a scale jump; but even if they are of equal length, this only implies that the scale is equal on average rather than pointwise along the corresponding branch.

We thus modify $F$ by composing it with two types of local segment-wise maps:
\begin{itemize}
\item a \emph{stretch} map $g_j$ which effects a change of the lengths of segment images,
\item a \emph{shift} map $r_j$ which subsequently equidistributes scale along a segment.
\end{itemize}

We apply these operations per segment in an iterative manner.

\begin{figure}[b]
\vspace{-0.15cm}
\centering
\begin{overpic}[width=0.99\columnwidth]{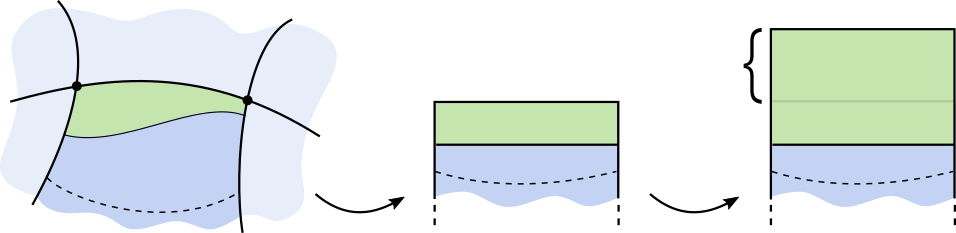}
\put(16,19){\small $M^c$}
\put(27.6,16){\small $G$}
\put(10,12.3){\small $S_j$}
\put(15.5,13.7){\small $s_j$}
\put(14.5,8){\small $\Omega_j$}
\put(37,4){\small $F$}
\put(53,10.8){\small $R_j$}
\put(71.4,4.5){\small $g_j$}
\put(88.6,14.5){\small $R_j'$}
\put(72.8,16.5){\small $w_j$}
\end{overpic}
\vspace{-0.15cm}
\caption{Illustration of strip definition and stretch map applied to perform padding of a segment $s_j$ by padding width $w_j$, cf.~Sec.~\ref{sec:pad1}.
}
\label{fig:padR}    
\end{figure}

\paragraph{Stretching}
For a boundary segment $s_j$, we consider a thin strip $S_j$ on $M^c$ which runs along the entire segment and maps  to a rectangular region $R_j$ via $F$. This is illustrated in Fig.~\ref{fig:padrect}b and Fig.~\ref{fig:padR}.

More formally, the strips are defined as follows. 
The restriction of $F$ to the segment $s_j$ (which maps $s_j$ to a straight segment in the plane) is bijective and so is the restriction $F_j$ to a sufficiently small neighborhood $\Omega_j \subset M^c$ of $s_j$.
We choose the rectangle $R_j$ within $F(\Omega_j)$ such that it includes $F(s_j)$ but no cone points and no joints except the ones on $s_j$. The thin strip $S_j$ on $M^c$ is then defined as $S_j = F_j^{-1}(R_j)$, as shown in Fig.~\ref{fig:padR}.

Outside of $S_j$ we preserve the map, but within $S_j$ we modify it by a one-dimensional scaling $g_j$ such that $S_j$ is mapped onto a larger rectangle $R_j'\supseteq R_j$ whose width (orthogonal to the segment $s_j$) is increased by a \emph{padding width} $w_j$ such that it extends across the original segment image by that width. This is illustrated in Fig.~\ref{fig:padrect}c and Fig.~\ref{fig:padR}. Effectively, the domain is locally \emph{padded} by an additional rectangular region $R_j' \backslash R_j$ of width $w_j$ along the image of $s_j$, cf.~Fig.~\ref{fig:padABC2}.

The situation is slightly different at the one segment $s_j$ per component where $T$ is rooted (cf. Sec.~\ref{sec:mtop}): it is separated into two parts by $T$ (cf.~Fig.~\ref{fig:poly} left).
Both parts can, however, be handled separately using the same technique, as detailed in Sec.~\ref{sec:padding}.

We define the padded map $F^p$ iteratively, iterating over the (arbitrarily ordered) strips $S_j$, $j = 1, \dots, n$, of each connected component of $M^c$.
$F^{p,0}$~coincides with $F$, and $F^{p,m+1}$ differs from $F^{p,m}$ only on $R_{m+1}$, where it is defined
as $g_{m+1} \circ F^{p,m}|_{R_{m+1}}$, where $g_{m+1}$ is the above scaling transformation (detailed in Sec.~\ref{sec:padding}). $F^p = F^{p,n}$.

\paragraph{Shifting}
In the padded map $F^p$ we again consider for each segment a strip $S^p_j$ (now defined based on $F^p$) and modify the map within this strip by a map $r_j$ with the following properties:
\begin{itemize}
\item its restriction to segment $s_j$ reparametrizes $s_j$ to constant speed, i.e. scaled arc-length, (in the case of a complex segment: \emph{piecewise}, i.e. constant per boundary curve, cf.~Sec.~\ref{sec:padding}),
\item it is identity on the rest of the strip's boundary,
\item it is continuous and bijective.
\end{itemize}

We define the shifted map $F^s$ iteratively, again iterating over the strips $S^p_j$ of each connected component of $M^c$.
$F^{s,0}$~coincides with $F^p$, and $F^{s,m+1}$ differs from $F^{s,m}$ only on $R^p_{m+1} = F^{s,m}(S^p_{m+1})$, where it is defined
as $r_{m+1} \circ F^{s,m}|_{R^p_{m+1}}$, where $r_{m+1}$ is the above shift map (detailed in Sec.~\ref{sec:padding}). $F^s = F^{s,n}$.

\begin{prop}
\label{prop:rotseam2}
$F^{s}$ is a rotationally seamless parametrization of $M$ with the same cones as $F$, and not only the angle but also the scale jump is constant per branch of the cut.
\end{prop}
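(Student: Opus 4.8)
The plan is to establish the four asserted properties of $F^{s}$ in turn—that it is a valid (continuous, locally injective) parametrization, that it is rotationally seamless, that it carries the same cones as $F$, and that its scale jump is constant per branch—by tracking how the elementary stretch maps $g_j$ and shift maps $r_j$ act on the already rotationally seamless map $F$ supplied by Proposition~\ref{prop:rotseam}.

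First I would check validity. Each $g_j$ and each $r_j$ is by construction a continuous bijection of its strip that agrees with the identity on the portion of the strip boundary not lying on $s_j$; hence every intermediate map $F^{p,m}$ and $F^{s,m}$ glues continuously to the unmodified map outside the current strip, and since each modification is a local diffeomorphism the iterative composition preserves local injectivity. Because the strips are chosen to contain no cone points and no joints other than those on $s_j$, the only places where successive strips interact are the corners shared by adjacent segments. There one must verify that the orthogonal stretch for $s_j$ and the differently oriented stretch for a neighbouring segment compose to a locally injective map; for this local bookkeeping I would appeal to the explicit construction in Sec.~\ref{sec:padding}, including the special treatment of the one segment per component split by the root of $T$.

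Next, rotational seamlessness together with cone preservation. The key observation is that both elementary maps preserve the axis-aligned structure of the segment images: the stretch $g_j$ is a one-dimensional scaling orthogonal to $s_j$, so it displaces the image of $s_j$ outward while keeping it on an axis-parallel line and merely lengthens the adjacent perpendicular segment images without tilting them, while the shift $r_j$ reparametrizes $s_j$ along the very same straight line and is the identity elsewhere on the strip boundary. Consequently every image line retains its direction, so for any pair of mated boundary curves the angle between their images remains exactly the multiple of $\nicefrac{\pi}{2}$ it was under $F$, constant per branch, and the congruence of mates on the tree part $T$ is untouched; this yields the rotation condition for a similarity $T_{pq}$, so $F^{s}$ is rotationally seamless. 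The same facts preserve the cones: the padded region is a rectangle, so corner angles stay right, and $r_j$ is the identity near corners, so the sum of parametric angles at each node is unchanged and no interior cone is introduced, the maps being local diffeomorphisms away from joints.

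The final and most substantive step—where the shift maps do their essential work, and the step I expect to be the main obstacle—is the constancy of the scale jump per branch. For a branch $\gamma_j$ with its two lifts, parametrize both by the common arc length of $\gamma_j$ on $M$. Under $F^{p}$ each lift maps to an axis-aligned segment, yet the local scale measured against this common parameter varies pointwise, because the constructed flat metric on $M^{c}$ need not assign matching conformal factors to the two sides of the cut, so mated segments may even have different lengths. The shift $r_j$ reparametrizes each lift to constant speed with respect to this common parameter, so after shifting the scale factor of $F^{s}$ is constant along each of the two mated curves; at any pair of mated points the scale of the transition $T_{pq}$ is then the ratio of these two constants, hence constant along the whole branch. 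The care needed here is to confirm that ``constant speed'' is taken with respect to the same parameter on both mates, so that mated points still correspond after reparametrization, and that complex segments carrying flat joints are reparametrized piecewise per boundary curve as stipulated; once this is in place the constant ratio follows at once, completing the proof.
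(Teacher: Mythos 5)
Your proposal is correct and follows essentially the same route as the paper's proof: continuity and local injectivity by the identity-on-the-interface property of the iteratively applied $g_j$ and $r_j$, rotational seamlessness and cone preservation from the fact that both maps keep segment images straight and axis-aligned, and constancy of the scale jump from the constant-speed reparametrization of each mate (the paper leaves the final ratio argument implicit, whereas you spell it out, and your observation that later strip maps must act as the identity on already-shifted segments is exactly the closing point of the paper's proof).
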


A proof is given in Sec.~\ref{sec:padding}.
The choice of padding widths determines the lengths of segments under $F^p$ and, as they are preserved by the shift maps, under $F^s$.
In the following Sec.~\ref{sec:equalize} we detail how \emph{equalizing} padding widths can be found:

\begin{definition}[Equalizing Padding Widths]
A set of padding widths  leading to all pairs of mates being of equal length under $F^s$ is called \emph{equalizing padding widths}.
\end{definition}

\begin{prop}
\label{prop:seam}
If $F^{s}$ is constructed using equalizing padding widths, it is a seamless parametrization of $M$.
\end{prop}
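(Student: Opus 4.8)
The plan is to upgrade the rotational seamlessness of $F^s$, already established in Proposition~\ref{prop:rotseam2}, to full seamlessness by showing that the per-branch scale jump --- which that proposition guarantees to be a single constant $\lambda_j$ along each branch of the cut --- is in fact equal to $1$ on every branch. Since seamlessness only adds to rotational seamlessness the requirement that each transition $T_{pq}$ be rigid (a rotation by a multiple of $\nicefrac{\pi}{2}$ rather than a similarity with a nontrivial scale factor), and the rotation angles are already multiples of $\nicefrac{\pi}{2}$ by rotational seamlessness, it suffices to rule out $\lambda_j \neq 1$.

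First I would relate the scale factor of a branch to the lengths of the two image curves of its mated boundary curves. Fix a branch $\gamma_j$ of $G$ and let $\gamma^c_p, \gamma^c_q$ be its two mates; parametrize both by the common branch parameter inherited from $\gamma_j$ via $\pi$, so that $\gamma^c_p(t)$ and $\gamma^c_q(t)$ are mated points for every $t$. Applying $dF^s$ to the boundary tangents and using the rotationally-seamless relation $T_{pq}\circ dF^s_p(t_p) = dF^s_q(t_q)$, with the scale of $T_{pq}$ equal to the constant $\lambda_j$, I obtain the pointwise identity $|\frac{d}{dt}F^s(\gamma^c_q(t))| = \lambda_j\,|\frac{d}{dt}F^s(\gamma^c_p(t))|$. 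Integrating over the branch shows that the image length of one mate equals $\lambda_j$ times the image length of the other. By hypothesis the padding widths are equalizing, so the two image lengths are equal; as both are positive, this forces $\lambda_j = 1$, and hence $T_{pq}$ is a rotation.

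Finally I would run this argument over every branch of the full cut. For branches of $G$ the above directly gives rigid transitions. For branches of the tree $T$ the transitions are already rigid: the metric is flat on $T$, so the original map $F$ is congruent across these mates (Proposition~\ref{prop:rotseam}), and neither the stretch nor the shift maps --- which act on strips along the segments of $\partial M'$ --- alter the tree boundary curves, so unit scale is preserved there. Since every transition $T_{pq}$ is thus rigid with rotation angle a multiple of $\nicefrac{\pi}{2}$, and $F^s$ is continuous and locally injective with exactly the prescribed cones (Proposition~\ref{prop:rotseam2}), $F^s$ satisfies the definition of a seamless parametrization.

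The main obstacle I anticipate is the bookkeeping at complex segments. There a single segment consists of several boundary curves joined at flat joints, the shift map reparametrizes it only piecewise (constant speed per boundary curve), and each constituent boundary curve belongs to a different branch with its own scale constant. I would need to check that \emph{equalizing} is imposed at the granularity of mated boundary curves rather than whole segments, so that the length-ratio argument can be applied independently to each branch, and to confirm that the strips chosen for padding and shifting can be taken disjoint from the tree boundary curves and from the cones (as arranged in the strip construction of Sec.~\ref{sec:pad1}), so that the per-branch integration above is valid and the tree transitions remain untouched.
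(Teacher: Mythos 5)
Your proposal is correct and follows essentially the same route as the paper: Proposition~\ref{prop:rotseam2} gives a constant scale jump per branch, and the equalizing condition makes the mated image lengths equal, which forces that constant to be $1$ and hence the transitions to be rigid. Your additional checks (integration along the branch, the tree branches of $T$, and the per-boundary-curve granularity at complex segments) are all consistent with how the paper sets things up in the definition of mates, Proposition~\ref{prop:rotseam}, and the shift-map construction, and simply make explicit what the paper's one-line proof leaves implicit.
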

\proof{The constant scale jump per branch ascertained by Prop. \ref{prop:rotseam2} together with the equal lengths of mated boundary curves' images implies a scale jump of zero per branch, thus rigid transitions.\qed}

{\subsection{Length Equalization} 
\label{sec:equalize}

\begin{figure}[b]
\centering
\begin{overpic}[width=0.44\columnwidth]{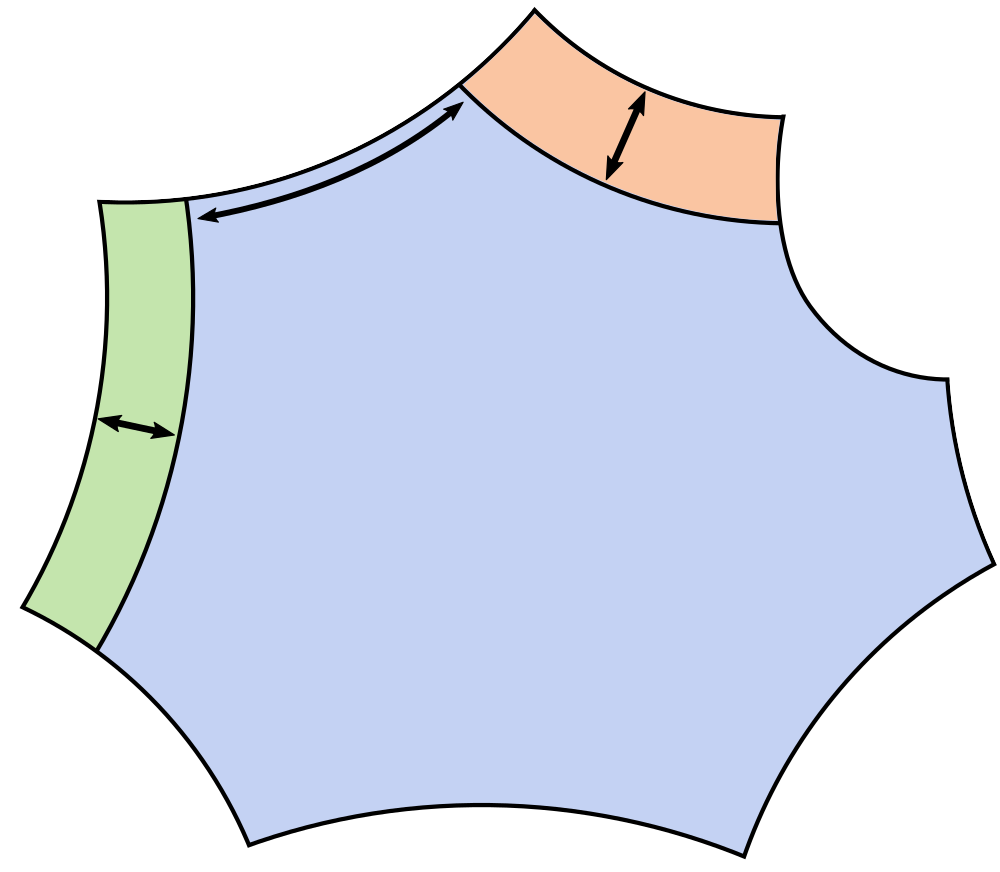}
\put(-18,45){$w_{\text{prev}(i)}$}
\put(35.5,62.5){$\ell_{i}$}
\put(62,82){$w_{\text{next}(i)}$}
\end{overpic}
\vspace{-0.05cm}
\caption{The length of segment $i$ is affected by the padding of the two adjacent segments: the original length $\ell_i$ changes to $\ell_i+w_{\text{prev}(i)} + w_{\text{next}(i)}$.}
\label{fig:padE2}    
\end{figure}  

When a boundary segment is padded, the lengths of the two adjacent segments' images change.
To make this precise, let $\ell_i$ be the length of segment $s_i$ before any padding is performed, and $w_i$ be the amount of padding applied to $s_i$. 
The length $\ell_i'$ of $s_i$ after each segment was padded according to values $\bm{w} = (w_0,w_1,w_2,...)$ is 
\begin{equation}
\ell_i' = w_{\text{prev}(i)} + \ell_i +  w_{\text{next}(i)},
\end{equation}
where prev$(i)$ and next$(i)$ are the two segments adjacent to $s_i$, preceding and following it in cyclic order along $\partial M'$, cf.~Fig.~\ref{fig:padE2}.

Our goal is to find an equalizing assignment of padding width variables $\bm{w}$ such that the lengths $\bm{\ell}' = \{\ell_0',\ell_1',\ell_2',...\}$ after padding are equal for each pair $(s_i,s_j)$ of mated segments, i.e.~ $\ell_i' =\ell_{j}'$.
This leads to \emph{length~equalization} equations 
\begin{equation}
\label{eq:equa}
w_{\text{prev}(i)} + w_{\text{next}(i)} - w_{\text{prev}(j)} - w_{\text{next}(j)}= \ell_{j} - \ell_i.\vspace{0.1cm}
\end{equation}

However, only if all cutgraph nodes are of degree 4, are all segments simple, thus mated in pairs.
When T-nodes, thus complex segments, are involved, the situation is a little different: a complex segment $s_i$ consists of multiple boundary curves; their mate curves, however, form simple segments due to the last property of Def.~\ref{def:admcut}. Hence, generally, a (simple or complex) segment $s_i$ is mated with a sequence $J_i = (s_j, s_k, ...)$ of one or more simple segments.
Length equalization equations then take this more general form:
\begin{equation}
\label{eq:equaGeneral}
\ell'_i = \sum_{j\in J_i} \ell'_j 
\vspace{-0.2cm}
\end{equation}
which expands to
\begin{equation}
\label{eq:equaGeneralw}
w_{\text{prev}(i)} \!+\! w_{\text{next}(i)}  - \sum_{j\in J_i}\! \left(w_{\text{prev}(j)} \!+\! w_{\text{next}(j)}\right) = \sum_{j\in J_i} \ell_j -\! \ell_i.
\end{equation}

These equations form a globally interdependent equation system:
\begin{equation}
\label{eq:equas}
A\bm{w} = \bm{b}, \quad w_i \geq 0 \,\forall i.
\end{equation}
Notice the non-negativity condition; it ensures that the padding operation actually stretches ($w > 0$) rather than squeezes ($w<0$) the strips along segments (which could be done only by a very limited amount).

This system needs to be solved to achieve length equalization and thus enable seamlessness.
Unfortunately, it is not generally feasible -- it may have no non-negative solution or even no solution at all. Notice that the system matrix structure is entirely determined by the cutgraph's combinatorics, leading to the following definition.
\begin{definition}[Equalizable Cutgraph]
An admissible  cutgraph for which equalization  system \eqref{eq:equas} is feasible for arbitrary  $\bm{b}$, is called \emph{equalizable}.
\end{definition}

Our key result, accompanying the proposed map padding technique, is a proof showing that there is an equalizable cutgraph for any genus and any admissible set of cones, as well as an efficient algorithm to construct such cutgraphs (cf.\ Sec.~\ref{sec:cutgraphs} and~Sec.~\ref{sec:eqproof}).

\label{sec:background}

\section{Equalizable Cutgraphs}
\label{sec:cutgraphs}

The foundation of our  construction of equalizible cutgraphs is a so-called \emph{hole chain}. 
Variations thereof, depending on the surface's genus and the cone configuration, then yield equalizable cutgraphs.
\begin{prop}
\label{th:feas1}
For any genus $g$ and any admissible prescription of cones $C$, there is an equalizable cutgraph, i.e. we can always obtain equalizing padding widths that enable a seamless parametrization.
\end{prop}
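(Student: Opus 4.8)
The plan is to reduce equalizability to two purely linear-algebraic conditions on the matrix $A$ of system~\eqref{eq:equas}, and then to build the hole chain so that both hold. Concretely, I claim that an admissible cutgraph is equalizable whenever (i) $A$ has full row rank, so that $A\bm{w}=\bm{b}$ is solvable over $\mathbb{R}$ for \emph{every} right-hand side $\bm{b}$, and (ii) $\ker A$ contains a strictly positive vector $\bm{w}^\ast>0$. Given these, for any $\bm{b}$ I pick a real solution $\bm{w}_0$ from (i); then $\bm{w}_0+t\bm{w}^\ast$ solves the system for all $t$, and for $t$ large enough all entries are non-negative, meeting the sign constraint in~\eqref{eq:equas}. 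Thus the entire task becomes the construction of a cutgraph realizing (i) and (ii).

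To motivate the construction I would first exhibit why a careless cut fails: for the torus cut by two loops crossing in a single degree-4 node, the lone equalization equation degenerates to $w_4+w_2-w_2-w_4=0=\ell_3-\ell_1$, which is infeasible for generic lengths. Hence surjectivity is the genuine constraint, and the hole chain is engineered precisely to break such symmetric cancellations. I would then define it: a cutgraph assembled from $g$ handle loops tied together by \emph{connectors} (plus, in the degree-4 variant, one extra path), cutting $M$ into two topological disks, with all interior sectors marked so that only nodes of degree 3 and 4 occur and at most two T-nodes appear. Cones are distributed between the two disks so that each component's corner count obeys~\eqref{eq:gaussdisk}, and admissibility in the sense of Def.~\ref{def:admcut} (embedded transversal branches, disk components, corner counts, and the flat-joint/mate condition) is then read off directly from the construction.

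The core of the argument is condition (i). I would order the segments along the chain and show that the mating pattern produced by the hole chain lets one solve the equalization equations~\eqref{eq:equaGeneralw} sequentially: each successive equation introduces a padding variable absent from the previous ones, so that after reordering $A$ is in echelon form with a nonzero pivot in every row, hence of full row rank. It is precisely the connectors that supply this staircase structure, preventing the $w_{\mathrm{prev}}+w_{\mathrm{next}}-w_{\mathrm{prev}}-w_{\mathrm{next}}$ cancellation seen in the naive torus cut. For condition (ii), in the all-degree-4 case the uniform vector $\bm{w}^\ast\equiv 1$ lies in $\ker A$, since padding every simple segment equally raises each mated length by the same amount $2$ and so preserves all equalities; this vector is strictly positive, as required.

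The main obstacle is establishing the full-rank property (i) rigorously for every genus and every admissible cone distribution, since this is exactly the step that breaks for naive cuts and must be extracted from the global combinatorics of the chain rather than from any single local node. A secondary difficulty is the exceptional handling when T-nodes (hence complex segments) are present: there the uniform vector is no longer in the kernel, because a complex segment mated to $|J_i|>1$ simple segments scales as $2$ while its mates contribute $2|J_i|$ under uniform padding, so for the finitely many genus-2 configurations that force T-nodes I would instead supply a tailored strictly positive kernel vector (or switch to the three- or four-disk variant of the cutgraph) and re-verify (i) and (ii) for those cases separately.
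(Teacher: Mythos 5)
Your high-level reduction is sound and matches what the paper does for the easy cases: full row rank of $A$ plus a non-negative kernel vector with the right support does yield feasibility for every $\bm{b}$, and $A\bm{1}=\bm{0}$ in the all-degree-4 setting is exactly Prop.~\ref{th:constant}. The genus-1/2 patterns are indeed handled in the paper by explicitly checking full row rank and the kernel vector $\bm{1}$, as you suggest. The problem is that for the general genus-$3+$ case you have replaced the entire content of the proposition with the unproven assertion that the hole chain's mating pattern puts $A$ into echelon form (``each successive equation introduces a padding variable absent from the previous ones''). That is precisely the step the paper has to work for, and its actual argument is quite different: it does not row-reduce $A$ at all, but introduces a \emph{stronger} auxiliary system $B\bm{w}=\bm{c}$ forcing every segment to unit length, proves $B$ is invertible by computing the determinant of a circulant matrix with associated polynomial $x(x^{m_k-2}+1)$, and this invertibility holds \emph{only because} the corner counts $m_k$ are not divisible by $4$. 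Your sketch never uses that non-divisibility condition, nor the odd-couple condition on the hole numbering --- both of which the paper's construction imposes and both of which are load-bearing (the odd couple is needed for the final $\pm\delta$ repair of the one deliberately violated equation). A staircase argument that is agnostic to these conditions is very unlikely to close, since the fourfold case shows the relevant square system genuinely drops rank when the divisibility condition fails.

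The secondary issue you flag --- T-nodes breaking the uniform kernel vector --- is real, but your proposed fix (find a strictly positive kernel vector, or switch cutgraphs) is not what makes the argument work and is not obviously achievable. The paper's resolution is different: it shows $A\bm{1'}=\bm{0}$ for $\bm{1'}$ the indicator of all non-extra-path segments, and then does the delicate work of constructing a \emph{particular} solution whose two extra-path entries are exactly zero (via a non-unit right-hand side for $B$, a correction along the alternating vector $\bm{k}$ anchored at a hole segment of the right parity, and the odd-couple repair), so that shifting by $\lambda\bm{1'}$ achieves non-negativity without ever needing positivity at the extra-path positions. In short: your framework is the right lens, but the proposition's substance --- why this particular cutgraph makes the system feasible --- is missing, and the structural hypotheses you omit are exactly where the proof lives.
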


Cutgraphs covering all cases (arbitrary genus, arbitrary admissible cones) are defined in the following. Their equalizability is shown in Sec. \ref{sec:eqproof}}.

Together with Prop.~\ref{prop:conemetric} (rectilinear cone metric), Prop.~\ref{prop:rotseam} (rotationally seamless map), and Prop.~\ref{prop:seam} (seamless modification given equalizing padding widths), this Prop.~\ref{th:feas1} (equalizable cutgraphs) \textbf{concludes the constructive proof of the main theorem~\ref{th:main}.}

\subsection{Hole Chain}
\label{sec:holechain}
Given a closed surface $M$ of genus $g > 0$, we cut it along $g$ non-intersecting non-homotopic non-separating smooth loops $\alpha_i$. This yields a topological sphere $M^\circ$ with $2g$ holes.
Note that each loop corresponds to two holes, which are called \emph{partners}.
Let the holes be numbered from $0$ to $2g-1$, and denoted $h_i$, in such a way that $h_0$ and $h_{2g-1}$ (called \emph{terminals}) are partners.

Let $\pi: M^\circ \rightarrow M$ be the canonical map from $M^\circ$ to $M$,
  taking $h_i$ and its partner $h_j$ to their corresponding loop $\alpha$: $\pi(h_i) = \pi(h_j) = \alpha$.

On each hole $h_i$ pick two distinct points $q_i$ and $q_i'$, such that they are identified across partners on $M$, i.e. for partners $h_i$, $h_j$ we have $\pi(q_i) = \pi(q_j)$ and $\pi(q_i') = \pi(q_j')$.
For each $0 \leq i < 2g-1$ we then further cut $M^\circ$ along a smooth non-intersecting path between holes $h_i$ to $h_{i+1}$, starting transversally at point $q_i'$ and ending transversally at point $q_{i+1}$. These paths are called \emph{connectors}. Note that after each such cut the surface remains a topological sphere with holes (each time one less), thus is path-connected; therefore these connectors always exist.
Loops and connectors together form a cutgraph we call \emph{hole chain}, as abstractly depicted in Fig.~\ref{fig:holechain} and on a surface in Fig.~\ref{fig:cutcorners}, which yields the surface $M'$, a sphere with one hole, i.e. a disk. Loops and connectors are assumed not to cross any prescribed cone point.

\begin{prop}
The hole-chain cutgraph for any genus $g > 0$ is admissible.
\end{prop}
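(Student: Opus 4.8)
The plan is to verify directly the four requirements of the admissible-cutgraph definition (Def.~\ref{def:admcut}) for the hole chain, equipping it with the corner marking forced by its node degrees. The first task is a combinatorial analysis of the nodes. Every node of the hole chain is a point where one or more connectors attach to a loop; since the loop passes through such a point it contributes two branch-ends there, so the node degree is $2$ plus the number of incident connector endpoints. Using that connectors start at the points $q_i'$ for $i=0,\dots,2g-2$ and end at the points $q_{i+1}$ over the same range, I would check that on each \emph{internal} loop (a partner pair $h_i,h_j$ with $i,j\notin\{0,2g-1\}$) both distinguished points receive exactly two connector ends, yielding two degree-$4$ nodes, whereas on the single \emph{terminal} loop (partners $h_0,h_{2g-1}$) each distinguished point receives exactly one connector end, yielding two degree-$3$ nodes (T-nodes). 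Hence the graph has exactly $2g$ nodes, of which $2g-2$ have degree $4$ and two have degree $3$; in particular all degrees are $3$ or $4$.

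With this in hand, the first two conditions are quick. Branches (loop arcs and connectors) are embedded smooth curves avoiding the cones by construction, and transversality at the nodes follows from the transversal attachment of the connectors -- and, at degree-$4$ nodes, from the two connectors arriving from the two distinct holes, i.e.\ from opposite sides of the loop; this settles the first condition. For the second condition I would invoke the construction's observation that each connector cut joins two distinct boundary components of a sphere-with-holes into one, lowering the hole count by one while preserving genus~$0$ and connectivity, so that after the $2g-1$ connector cuts the $2g$ holes have merged into a single one and $M'$ is a single disk.

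The third condition is the clean numerical heart of the argument. Marking all four sectors of every degree-$4$ node and two of the three sectors of every T-node as corners, the total corner count of the single component $M'$ is $4(2g-2)+2\cdot 2 = 8g-4$. On the other side, admissibility of $C$ gives the Gauss--Bonnet relation $\sum_i(1-\tfrac14 k_i)=2-2g$, i.e.\ $\sum_i(k_i-4)=8g-8$, so the count required by Eq.~\eqref{eq:gaussdisk}, namely $4+\sum_i(k_i-4)$, equals exactly $8g-4$. The two agree, so the third condition holds for the (only) component.

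The remaining condition is where I expect the real work to lie, and it is the step that pins down the corner marking at the T-nodes. Both flat joints sit at the two T-nodes, which both lie on the \emph{single} terminal loop; that loop is split by its two nodes into just two arcs, each appearing as a mated pair of boundary curves. The naive marking that keeps the loop geodesically straight through each T-node makes each flat joint pair the two loop-arc copies on the side away from the connector; one then finds that the two resulting complex segments are mates of one another, so a flat-joint curve has a flat-joint mate and the condition \emph{fails}. The fix, which I would prove is always available, is to choose at each T-node one of the two sectors incident to the connector as the flat sector -- and to use \emph{different} loop arcs at the two T-nodes -- so that each flat joint pairs one loop-arc copy with one connector copy. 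A short case check then shows that the mate of every curve participating in a flat joint (the opposite-side copy of that loop arc, or the other copy of that connector) is bounded by corners at both endpoints and hence forms a simple segment, so no two flat-joint curves are mates. Confirming that this choice is simultaneously consistent at both T-nodes and realizable by the rectilinear cone metric of Prop.~\ref{prop:conemetric}, whose only constraint is that the prescribed sector angles sum to $2\pi$ at each node, is the crux of the proof.
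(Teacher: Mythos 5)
Your analysis of the node degrees is where the proof goes wrong, and it propagates into the rest. You conclude that the terminal loop carries two degree-$3$ nodes, giving $2g$ nodes of which two are T-nodes. In the paper's construction the connector \emph{endpoints} are identified in pairs across partner holes: every connector end on $h_i$ coincides under $\pi$ with a connector end on its partner $h_j$, so each node receives the loop (two branch-ends) plus exactly two connector ends from opposite sides, i.e.\ degree $4$. The whole point of requiring the terminals $h_0$ and $h_{2g-1}$ to be partners is that the two ``lone'' connector endpoints --- $q_0'$ (start of the first connector) and $q_{2g-1}$ (end of the last one) --- lie on partner holes and are identified with \emph{each other}, closing the chain with a single degree-$4$ node on the terminal loop; the unused points $q_0$ and $q_{2g-1}'$ play no role. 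Hence the hole chain has $2g-1$ nodes, all of degree $4$, and $4g-2$ branches, exactly as stated in the paper's proof and in the caption of Fig.~\ref{fig:cutcorners} ($5$ nodes, $10$ branches, $20$ corners for $g=3$; your version would give $6$ nodes and $11$ branches). Your reading is an understandable parse of the sentence defining $q_i,q_i'$, but it contradicts the construction's intent and its stated counts.

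The consequence is that there are no T-nodes and no flat joints in the hole chain, so the fourth admissibility condition is vacuous and your entire final paragraph --- diagnosing a failure of that condition and repairing it by re-marking sectors at the T-nodes --- addresses a problem that does not exist in the object being discussed. (Your conditions 1--3 survive: transversality, the disk-topology induction over connector cuts, and the corner count are fine, and the count $4(2g-2)+2\cdot 2=8g-4$ coincidentally agrees with the correct $4(2g-1)=8g-4$.) Note also that even if your repaired T-node variant were itself admissible, it is not the cutgraph the paper uses downstream: Sec.~\ref{sec:conemetrix} asserts that T-nodes arise only from the extra path, and the fourfold-case equalizability proof in Sec.~\ref{sec:eqproof} relies on the hole chain having $8g-4$ \emph{simple} segments with a purely circulant system matrix, which your version would break.
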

\proof
As the connectors' endpoints $q_i$, $q_j$ are identified in pairs on $M$ across partners $h_i$, $h_j$, each resulting cutgraph node (at point $\pi(q_i) = \pi(q_j)$ on $M$) is of degree 4 (cf.~Fig.~\ref{fig:cutcorners}).
All branches are smooth curves meeting transversally at their endpoints and not crossing cones by construction.
The surface is cut to a single component with disk-topology.
As the set of cones $C$ is admissible, we have $\sum_j \Theta_j = 2\pi(2-2g)$; the hole chain cutgraph has $2g-1$ nodes with 4 corners each, i.e. a total of $8g-4$ corners. Thus Eq.~\eqref{eq:gaussdisk} is satisfied.
\qed

\paragraph{Odd-Couple Condition}
We impose one condition (besides terminals being partners) on the way the numbering of holes is chosen: there needs to be at least one odd couple, i.e. two partner holes which have an odd number of holes in between them in the chain, i.e.\ there is an $i$ and an integer $k$ such that $h_i$ and $h_{i+2k}$ are partners. This will be expected in the proof of equalizability. Note that this is impossible if there are just four or less holes, thus instead special case variations of the hole chain are used for genus 1 (two holes) and genus 2 (four holes) cases, as detailed in Sec.~\ref{sec:special}.

\begin{definition}[Fourfold Cones]
A set of cones $C = \{(c_i,k_i)\}$ with $k_i$ divisible by 4 for each $i$ is called \emph{fourfold}.
\end{definition}

While the above hole chain cutgraph is not equalizable in general, it permits equalization for \emph{specific} righthand sides $\bm{b}$:
\begin{prop}
\label{prop:fourfold}
For any genus, a fourfold cone prescription implies a righthand side $\bm{b}$ for which problem \eqref{eq:equas} is feasible.
\end{prop}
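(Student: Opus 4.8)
The plan is to reduce the proposition to a purely linear solvability statement and then read that off from the geometry of the developed boundary. Throughout I work with the hole-chain cutgraph, whose nodes all have degree~$4$, so every segment is simple and the $8g-4$ boundary segments $s_m$ are mated in pairs by a fixed-point-free involution $\sigma$; the equalization system is then \eqref{eq:equa}, i.e.\ \eqref{eq:equas} with $b_{(i,\sigma(i))}=\ell_{\sigma(i)}-\ell_i$. The first observation is that the non-negativity constraint is essentially free: the all-ones vector $\mathbf 1$ lies in $\ker A$, since uniform padding ($w_m\equiv 1$) changes every $\ell'_m$ by the same amount $2$ and hence preserves all the equalities $\ell'_i=\ell'_{\sigma(i)}$. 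Thus if $A\bm w=\bm b$ admits any real solution $\bm w_0$, then $\bm w_0+t\mathbf 1$ is again a solution for every $t$ and is non-negative for $t$ large enough. The proposition therefore reduces to showing $\bm b\in\operatorname{range}(A)$.

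Next I would characterize $\operatorname{range}(A)$ via the left null space. Writing the row for a pair $(i,\sigma(i))$ as $+1$ on the two padding variables $w_{\mathrm{prev}(i)},w_{\mathrm{next}(i)}$ adjacent to $s_i$ and $-1$ on those adjacent to $s_{\sigma(i)}$, a dependency is a weighting $\nu_m$ on segments, antisymmetric across mates ($\nu_{\sigma(m)}=-\nu_m$), whose adjacent sums vanish, $\nu_{m-1}+\nu_{m+1}=0$ for all $m$. The latter forces $\nu_{m+2}=-\nu_m$, so along the boundary cycle $\nu$ is determined by its two values $\nu_0,\nu_1$ and equals $(-1)^{\lfloor m/2\rfloor}$ times these on the even- and odd-index classes. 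Pairing such a $\nu$ against $\bm b$ collapses, using the antisymmetry, to $\sum_m\nu_m\ell_m$ up to sign. Hence $\bm b\in\operatorname{range}(A)$ is equivalent to the vanishing of the alternating length sums $\sum_{m\ \mathrm{even}}(-1)^{m/2}\ell_m$ and $\sum_{m\ \mathrm{odd}}(-1)^{(m-1)/2}\ell_m$ (restricted to whichever of the two patterns survive the antisymmetry imposed by $\sigma$).

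Now the fourfold hypothesis enters. By Prop.~\ref{prop:rotseam} the segments are axis-aligned under $F$ and consecutive segments meet at right angles, so a~priori $s_m$ points in direction $d_0+m\tfrac{\pi}{2}$ shifted by the turning accumulated at the cones traversed along the tree $T$. A fourfold cone has angle $k_i\tfrac{\pi}{2}$ with $k_i$ divisible by $4$, i.e.\ a multiple of $2\pi$, so the turning contributed at each cone is a multiple of $2\pi$ and the shift vanishes modulo $2\pi$. Consequently the developed boundary direction is exactly $d_0+m\tfrac{\pi}{2}$: even-indexed segments are horizontal and odd-indexed ones vertical, with the displacement sign given by $m\bmod 4$. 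Since the developed boundary is a closed polygon its horizontal and vertical displacements each cancel, which is precisely $\sum_{m\ \mathrm{even}}(-1)^{m/2}\ell_m=0$ and $\sum_{m\ \mathrm{odd}}(-1)^{(m-1)/2}\ell_m=0$. These are exactly the solvability conditions from the previous step, so $\bm b\in\operatorname{range}(A)$ and, by the first paragraph, a non-negative solution exists.

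The main obstacle is the combinatorial bookkeeping relating $\sigma$ to the index parities: I must verify that the hole-chain mating is compatible with the two sign patterns, i.e.\ that the antisymmetry $\nu_{\sigma(m)}=-\nu_m$ leaves exactly the closure functionals (and nothing more) in $\ker A^{\mathsf T}$, so that fourfold-induced closure is both necessary and sufficient for $\bm b\in\operatorname{range}(A)$. This requires pinning down how $\sigma$ acts on indices $\bmod 4$ for the concrete hole-chain numbering, and checking that the single segment split by the root of $T$ (handled separately, cf.~Sec.~\ref{sec:pad1}) does not disturb the parity accounting. I expect the non-fourfold case to fail precisely here: a cone of odd $k_i$ shifts the post-tree directions by an odd multiple of $\tfrac{\pi}{2}$, misaligning the fixed $\pm1$ patterns with the true horizontal/vertical closure, which is exactly why the plain hole chain is not equalizable in general.
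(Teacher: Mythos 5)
Your proposal is correct in substance and reaches the same geometric crux as the paper, but by a genuinely different linear-algebraic route. The shared key fact is that for fourfold cones the transitions across the tree $T$ are trivial, so the developed boundary is a closed rectilinear polygon whose closure yields the two alternating length sums $\sum_i (-1)^i \ell_{2i} = 0$ and $\sum_i (-1)^i \ell_{2i+1} = 0$. Where you differ is in how this fact is converted into solvability of $A\bm{w}=\bm{b}$: the paper introduces the auxiliary, strictly stronger system $B\bm{w} = \bm{1}-\bm{\ell}$ (unit target lengths), shows via the circulant/Toeplitz structure that $B$ with two rows deleted has full row rank, and uses polygon closure to argue the two deleted equations are then satisfied automatically; you instead apply the Fredholm alternative directly to $A$, characterizing $\ker A^{\mathsf T}$ as antisymmetric segment weightings satisfying $\nu_{m-1}+\nu_{m+1}=0$ and identifying the resulting functionals with the closure sums. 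Your route is arguably more conceptual and explains \emph{why} the fourfold right-hand side is special; the paper's route is constructive (it produces explicit padding widths) and is reused almost verbatim in the non-fourfold case, which is presumably why the authors chose it. One remark in your favor that you seem not to have noticed: the ``main obstacle'' you flag is not actually an obstacle for this proposition. You only need $\ker A^{\mathsf T}$ to be \emph{contained} in the span of the two closure functionals, and that already follows from the recurrence $\nu_{m+2}=-\nu_m$ alone; the antisymmetry condition $\nu_{\sigma(m)}=-\nu_m$ can only shrink the left null space further, so pinning down how $\sigma$ interacts with the index parities is needed only if you want to show the closure conditions are also \emph{necessary} (or to analyze the non-fourfold failure), neither of which the proposition asks for. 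The only detail you should still make explicit is why the tree excursion contributes no net displacement (not merely no net rotation) to the developed boundary: the transition across each tree branch fixes the cone at its tip, so once its rotational part is a multiple of $2\pi$ it is the identity, and the two sides of the cut develop on top of each other (cf.\ Fig.~\ref{fig:poly}).
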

A proof is given in Sec.~\ref{sec:eqproof}. For the general, non-fourfold case, variations of this basic cutgraph are used, as detailed in the following.

\begin{figure}[bt]
\centering
\begin{overpic}[width=0.99\columnwidth]{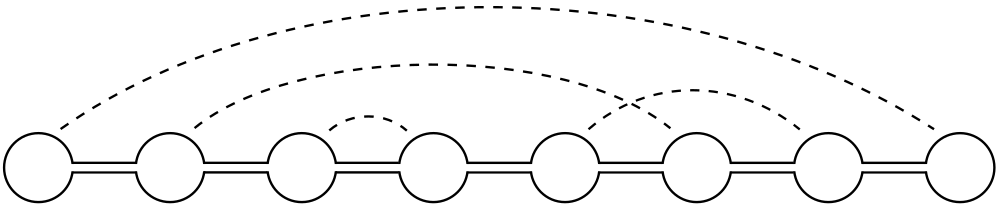}
\end{overpic}
\vspace{-0.2cm}
\caption{Schematic depiction of a chain of holes for a genus $g=4$ surface: circles depict holes (emerging from cutting the surface along $g$ loops), straight line segments depict the sides of cut paths (connectors) between these holes. Together, this hole chain cutgraph cuts the surface to a topological disk, i.e. a sphere with one hole (bounded by the closed black curve).  An exemplary partnership of holes (due to each loop corresponding to two holes) is indicated by dashed arcs; depending on the chosen ordering of holes in the chain, these partner arcs will look different.}
\label{fig:holechain}    
\vspace{-0.2cm}
\end{figure}  

\subsection{General Case (Genus 3+)}
\label{sec:gen3}

If the cone prescription is not fourfold, i.e. there is at least one $k_i\!\!\mod 4 \neq 0$, we extend the hole chain cutgraph by one extra path (cf.~Fig.~\ref{fig:extracutB}) -- which makes it equalizable.

\begin{definition}[Valid Extra Path]
A simple path is called a valid extra path for a hole chain cutgraph if
\begin{itemize}
\item it does not cross any cone,
\item only its endpoints intersect the hole chain cutgraph,
\item at least one endpoint is on a hole of the hole chain,
\item no endpoint is coincident with a node of the hole chain.
\end{itemize}
\end{definition}
Notice that this extra path forms two additional nodes, both of degree 3, i.e. T-nodes, at its endpoints -- at each we mark as corners the two sectors directly adjacent to the extra path. Hence the total number of corners increases from $8g-4$ to $8g$. At the same time, the extended hole chain cutgraph cuts the surface into two components, $M_0'$ and $M_1'$, each with disk-topology, with numbers of corners $n_0, n_1$. We have $n_0+n_1 = 8g$.

In order for the extended hole chain to remain admissible, we need to ensure that Eq.~\eqref{eq:gaussdisk} is satisfied, i.e. the number $n_i$ of corners per component $M_i'$ needs to match the total curvature of cones $C_i'$ prescribed within the component. Note that this is satisfied for $M_0'$ if and only if  it is satisfied for $M_1'$. Also, if the endpoints lie on two mated segments, we choose them as mated points so as to create a degree 4 node rather than two opposite T-nodes. We furthermore require that the numbers $n_0$, $n_1$ of corners are not divisible by 4. This will be expected in the proof of equalizability.

\begin{definition}[Admissible Extra Path]
A valid extra path that yields corner numbers $n_0$, $n_1$ not divisible by 4 and satisfying Eq.~\eqref{eq:gaussdisk} is called admissible.
\end{definition}

\begin{figure}[b]
\centering
\begin{overpic}[width=0.99\columnwidth]{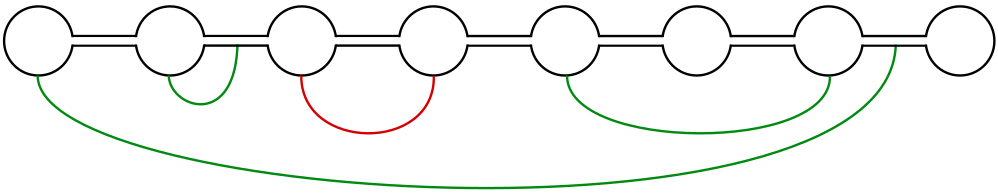}
\end{overpic}
\vspace{-0.05cm}
\caption{Examples of extra cuts (green) that could be added to the hole chain cutgraph. The red path is not a valid extra cut because it splits the surface into two components with $n_0 = 4$ and $n_1 = 8g-n_0 = 28$ corners (cf.~Sec.~\ref{sec:gen3}).}
\label{fig:extracutB}    
\end{figure}  

\begin{prop}
\label{th:extraexists}
An admissible extra path exists for any non-fourfold cone prescription and any genus $g \geq 3$.
\end{prop}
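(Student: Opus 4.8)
The plan is to reduce the geometric existence question to a purely combinatorial one by decoupling the two quantities an extra path controls. Fix the disk $M'$ produced by the hole chain; its boundary $\partial M'$ carries the $8g-4$ corners in cyclic order, and all cones lie in the interior. An extra path is then a chord of this disk from a point $a$ to a point $b$, both in the interior of boundary segments (hence away from nodes), which splits $M'$ into $M_0'$ and $M_1'$. The key observation I would establish is that the two admissibility data are independent: the corner count satisfies $n_0 = c_0+2$, where $c_0$ is the number of original corners strictly on the $M_0'$-side arc between $a$ and $b$ and depends \emph{only} on the endpoint positions; whereas which cones fall into $M_0'$ depends \emph{only} on the homotopy class in which the chord is routed. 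Sliding $b$ across corners realizes every $c_0\in\{0,\dots,8g-4\}$, hence every $n_0\in\{2,\dots,8g-2\}$, while $a$ can be kept on a loop-derived (hole) segment, meeting the validity conditions. Independently, I would invoke the elementary fact that in a disk any prescribed subset $S$ of the interior marked points can be cut off by a simple arc between two fixed boundary points (start from an arbitrary such arc and push thin, cone-avoiding fingers around the points whose side must change).

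With this decoupling, admissibility of the extra path reduces to finding a cone subset $S$ (to be placed in $M_0'$) together with a \emph{compatible} corner count $n_0 = 4 + \sum_{p_j\in S}(k_j-4)$ -- which is exactly Eq.~\eqref{eq:gaussdisk} for $M_0'$ -- such that $2\le n_0\le 8g-2$ and $n_0\not\equiv 0\pmod 4$; the analogous condition on $M_1'$ is then automatic since $n_0+n_1=8g\equiv 0\pmod 4$. Writing $\kappa_j=k_j-4\ge -2$, Gauss--Bonnet on the whole disk $M'$ gives $\sum_{\text{all}}\kappa_j = 8g-8$, so the task becomes: exhibit a subset $S$ with $\sum_S\kappa_j\in[-2,\,8g-6]$ and $\sum_S\kappa_j\not\equiv 0\pmod 4$.

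For this number-theoretic core I would use the non-fourfold cone $c_i$, i.e.\ $\kappa_i\not\equiv 0\pmod 4$, and split into two cases. If $\kappa_i\le 8g-6$, then $S=\{c_i\}$ already has sum $\kappa_i\in[-2,8g-6]$ with $\kappa_i\not\equiv 0$, and we are done. If instead $\kappa_i\ge 8g-5$, the remaining cones must sum to $\le -3$, so cones with $\kappa\in\{-1,-2\}$ exist; I would add them to $\{c_i\}$ one at a time, forming partial sums that start above $8g-6$ and decrease in steps of $1$ or $2$ down to at most $8g-8$. The first partial sum that drops to $\le 8g-6$ then lies in $\{8g-7,\,8g-6\}$ (it cannot undershoot by more than a single step), and -- the pleasant point -- both values are $\ge -2$ and satisfy $8g-7\equiv 1$, $8g-6\equiv 2\pmod 4$, so neither is $\equiv 0\pmod 4$. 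In either case a valid target $n_0$ together with a cone partition exists, which by the decoupling yields an admissible extra path. The hypothesis $g\ge 3$ enters by guaranteeing that the window $[-2,8g-6]$ has length $\ge 20$, giving the crossing argument room; the low-genus cases are instead handled by the special cutgraphs of Sec.~\ref{sec:special}.

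I expect the main obstacle to be rigorously justifying the decoupling -- specifically, that endpoint placement (controlling the corner count) and routing (controlling the cone partition) are genuinely independent and can be realized simultaneously by a single simple, cone-avoiding, node-avoiding path with an endpoint on a hole. Once this reduction is in place, the number-theoretic step, including the degenerate case of a single cone of very large curvature, is comparatively routine, precisely because the forced landing values $\{8g-7,8g-6\}$ automatically avoid $0\pmod 4$.
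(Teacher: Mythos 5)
Your argument is correct, and it reaches the conclusion by a genuinely more abstract route than the paper. The paper's proof is a single explicit construction: pick one cone $c_i$ with $k_i$ not divisible by $4$, run a path from a non-corner point $q$ on hole $h_0$ out to $c_i$, around it, back alongside itself, and then along the cutgraph past $k_i-2$ corners to a point $q'$; the region cut off contains exactly $\{c_i\}$ and has $k_i$ corners, so both components violate divisibility by $4$ and Eq.~\eqref{eq:gaussdisk} holds. This is precisely the concrete realization of your Case~1 ($S=\{c_i\}$, $n_0=k_i$). What your version adds is, first, the clean reduction via the decoupling lemma --- endpoint placement fixes the corner split, routing fixes the cone split, and the two are independent; this is the only topological content that genuinely needs proof, and the collar-plus-fingers argument you sketch is the standard way to establish it --- and, second, Case~2 for a cone with $k_i>8g-2$. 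That case is not covered by the paper's construction as written (a simple path cannot pass $k_i-2>8g-4$ corners along a boundary carrying only $8g-4$ of them), and your observation that the first partial sum to enter the window $[-2,8g-6]$ must land on $8g-7$ or $8g-6$, neither of which is $\equiv 0\pmod 4$, closes it with no extra hypotheses. So your proof is, if anything, more complete than the paper's on this point.

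Two minor corrections. The hypothesis $g\ge 3$ is not really needed to make your window ``long enough'' --- the argument works verbatim for $g\ge 1$; the restriction to $g\ge 3$ comes from the hole chain itself (the odd-couple condition requires at least six holes), so low genus is rerouted to the special patterns for independent reasons. And when sliding $b$ you should also respect the last condition of Definition~\ref{def:admcut}: if $b$ lands on the mate of the segment carrying $a$, the two flat joints would face each other across a branch; the paper resolves this by placing $a$ and $b$ at mated points so the two T-nodes merge into a single degree-$4$ node, and your endpoint argument should incorporate the same adjustment.
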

\proof
Pick one prescribed cone $c_i$ with $k_i\!\!\mod 4 \neq 0$. Let $\beta$ be a simple path from a point $q$ on hole $h_0$ which is not a corner to $c_i$ such that it does not contain any other cone. Let $\gamma$ be a path that starts at $q$, runs (arbitrarily close) along one side of $\beta$, then around $c_i$, then back along the other side of $\beta$, and ultimately (arbitrarily close) along the cut $G$ until it has passed $k_i-2$ corners. It then connects to a point $q'$ on the segment it reached. If $\gamma$ is chosen sufficiently close to $\beta$ and $G$, it is an example of a valid extra path: the region that contains $c_i$ contains no other cone and it has $k_i$ corners (the $k_i-2$ corners passed along $G$ plus the two corners formed by the extra path itself with $G$ at $q$ and $q'$). Also, $\gamma$ is connected to a hole, namely $h_0$.\qed

The equalizability of the hole chain cutgraph extended by an admissible extra path is proven in Sec.~\ref{sec:eqproof}.

\subsection{Special Cases (Genus 0, 1, 2)}
\label{sec:special}

\subsubsection*{Genus 0 Case}

In the case of a topological sphere, our method formally is applicable, but does not actually contribute anything: the cutgraph is empty; there are no cuts across which the cone metric could be non-seamless, thus no padding is required. The existence of conformal metrics with prescribed cones on the topological sphere $M$ is well-known \cite{Troyanov:1991}.}

\begin{figure}[t]
\centering
\begin{overpic}[width=0.99\columnwidth]{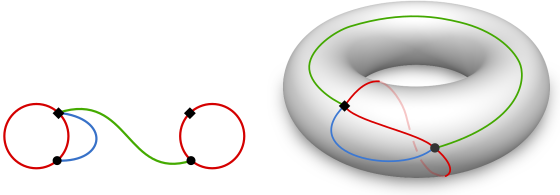}
\end{overpic}
\vspace{-0.35cm}
\caption{Cutgraph pattern for genus 1 surfaces, shown abstracly (left) and on an example surface (right). The surface is partitioned into a 2-corner region (enclosed by blue and red paths) and a 6-corner region.}
\label{fig:gen1}    
\end{figure}  

\subsubsection*{Genus 1 Case}
\label{sec:gen1}

For the case of genus 1 surfaces, we (like in the general case) add one extra path, but deviate slightly from the general hole chain pattern in terms of identification of connector endpoints. The cutgraph pattern is depicted in Fig.~\ref{fig:gen1}.
Notice that the surface is split into a component with 2 corners and a component with 6 corners. It is easy to see that for any admissible prescription of cones on a genus 1 surface, one either has no cones at all (in this case the basic hole chain cutgraph is sufficient, cf.~Prop.~\ref{prop:fourfold}) or one has, among the prescribed cones, one or more cones whose curvature sums up to $\pi$ (due to Gauss-Bonnet there are cones of positive and of negative curvature, and the case of a single positive cone of curvature $\nicefrac{\pi}{2}$ is the one non-admissible case, cf.~Sec.~\ref{sec:intro}). The surface bi-partition by this cutgraph pattern is thus compatible with any non-empty cone prescription, i.e. the paths can be chosen in an admissible way on $M$.

The equalization equation system corresponding to this pattern is easily checked explicitly for non-negative feasibility (cf.~Sec.~\ref{app:g1}).

\begin{figure}[b]
\centering
\begin{overpic}[width=0.95\columnwidth]{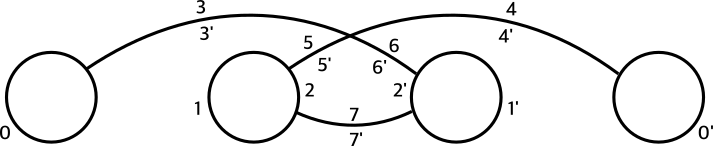}
\end{overpic}
\vspace{-0.05cm}
\caption{One of the cutgraph patterns for genus 2 surfaces. Segments $i$ and $i'$ are mates, i.e. correspond to a common cutgraph branch. The surface is partitioned into a 5-corner region (center) and a 11-corner region (surround).}
\label{fig:gen2val5}    
\end{figure}  

\subsubsection*{Genus 2 Case}
\label{sec:gen2}

For the case of genus 2 surfaces, we (like in the genus 1 case) need to deviate from the general case. In contrast to the genus 1 case, where we could assume that a subset of prescribed cones always have curvatures summing to one specific value, five cases need to be distinguished: there is a subset of prescribed cones with curvatures summing to $\pi$, $\nicefrac{\pi}{2}$, $-\nicefrac{\pi}{2}$, $-\pi$, or $-3\nicefrac{\pi}{2}$ (compatible with regions with 2, 3, 5, 6, and 7 corners, respectively).
This list is exhaustive because the total sum of cone curvatures is $-4\pi$ on a genus 2 surface, and not all cones have curvatures that are multiples of $2\pi$ (as this case was handled already in Prop.~\ref{prop:fourfold}); thus at least one of these five values has to appear as a subsum.

Depending on which curvature sum subset is available in a given set of prescribed cones, the cutgraph pattern needs to be chosen compatibly. For the case that a cone subset with curvature sum $-\nicefrac{\pi}{2}$ is available (as in most practical scenarios), the pattern depicted in Fig.~\ref{fig:gen2val5} can be used; for the remaining patterns refer to Sec.~\ref{app:g2}. Notice that this pattern is a variation of the basic hole chain: two connectors are required to cross. This partitions the surface into a 5-corner region (compatible with a curvature $-\nicefrac{\pi}{2}$ subset) and a 11-corner region (compatible with the remaining cones).

The equation systems corresponding to these patterns are easily checked explicitly for non-negative feasibility (cf.~Sec.~\ref{app:g2}).

\section{Implementation}
\label{sec:implement}

We now describe how our algorithm can be implemented for discrete surfaces. Thus, in the following, $M$ represents a closed triangle mesh of arbitrary genus $g$. Cones are prescribed on vertices of $M$; such vertices are called \emph{cone vertices}.

\paragraph{General Overview} The overall algorithmic steps are:
\begin{enumerate}
\item Construct $g$ non-contractible loops and cut $M$ along these loops (Sec.~\ref{sec:61}).
\vspace{0.15cm}
\item Connect all holes using shortest paths, selecting the connection pattern based on the given genus and cone prescription.
If necessary, add one extra path;  then cut the mesh (Sec.~\ref{sec:61}).
\vspace{0.15cm}
\item Set target angles at cone and corner vertices. Compute a corresponding discrete conformal metric (Sec.~\ref{sec:62}).
\vspace{0.15cm}
\item Number all cut segments and set up the system matrix $A$ and righthand side $\bm{b}$ accordingly (Sec.~\ref{sec:63}).
\vspace{0.15cm}
\item Compute a solution to the linear system $A\bm{w} = \bm{b}$; add a constant shift to yield a solution $\bm{w} \geq 0$ (Sec.~\ref{sec:63}).
\vspace{0.15cm}
\item Extend the cutgraph to include all cones; lay out the mesh in the plane according to the metric (Sec.~\ref{sec:64}).
\vspace{0.15cm}
\item Perform padding according to the computed padding widths~$\bm{w}$ (Sec.~\ref{sec:65}).
\vspace{0.15cm}

\end{enumerate}

In the end, this yields seamless parametrizations with prescribed cones. While the parametric distortion may initially be high,
these parametrizations provide the \emph{feasible starting point} required by techniques for injectivity-preserving parametrization distortion optimization (cf.~Sec.~\ref{sec:66})
as well as  quadrangulation methods based on quantization of seamless parametrizations.

\subsection{Cutgraph}
\label{sec:61}
On $M$, one option to obtain $g$ non-intersecting non-contractible loops is via handle or tunnel loop algorithms \cite{DeyFW13}, modified to avoid cone vertices. A simpler robust approach is to iteratively cut the mesh $g$ times, each time by an arbitrary non-contractible loop, not containing a cone vertex or a boundary vertex, obtained using the tree-cotree algorithm \cite{Erickson05greedyoptimal}. For the sake of ultimately yielding a cutgraph that is not unnecessarily long and convoluted, it is advisable to pick a short loop each time.

For the same reason we construct the connectors as shortest paths, not containing cone, boundary, or other paths' vertices, between the $2g$ holes using Dijkstra's algorithm. A natural ordering of the holes in the chain can be determined using a Hamiltonian path algorithm; this order then needs to be adjusted slightly before the connectors are constructed, to ensure the paired-terminals condition and the odd-couple condition are satisfied (cf.~Sec.~\ref{sec:holechain}).

\paragraph{Mesh Refinement}
Here and in the following we work with discrete paths/loops, following the edges of the mesh $M$. To be robust regardless of the mesh structure, after each construction of a path we split each mesh edge that is not on a path if its two vertices both are either on the boundary, on a path, or on a cone, thereby ensuring that the mesh, cut by the paths, remains path-connected even in the sense of the discrete edge paths (avoiding boundary, path, and cone vertices) we use here. 

\paragraph{Extra Cut}
The extra cut path (needed for the general genus $g \geq 3$ case) is likewise constructed as a shortest path. However, we need to employ a cone-aware variant of Dijkstra's algorithm in order to ensure cone/corner compatibility, cf.~Eq.~\eqref{eq:gaussdisk}. 

To this end, to each directed dual edge $e$ of the mesh $M'$ we assign a value $\rho_e$ (with $\rho_e = -\rho_{\bar e}$ for oppositely directed dual edges $e$, $\bar e$) such that the sum of these values clockwise around a single cone vertex $c_i$ is $\hat\Theta_i$, and around non-cone vertices zero. Such an assignment can, for instance, be achieved using a spanning tree of the cones (dashed in Fig.~\ref{fig:periodtree}), rooted at the boundary $\partial M'$: initialize $\rho$ at all leaves of the tree and propagate the values towards the root, summing values where branches meet.

\begin{figure}[tb]
\centering
\begin{overpic}[width=0.99\columnwidth]{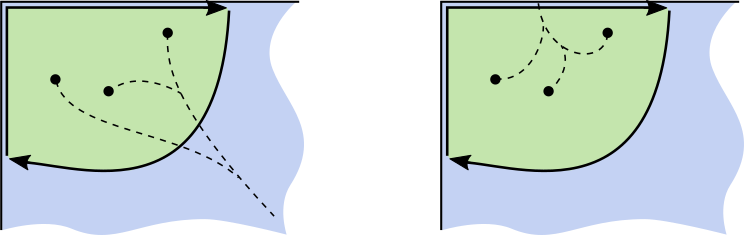}
\put(12,6.2){\small$\gamma$}
\put(2,15.5){\small$\beta$}
\put(6,23.5){\small$\frac{\pi}{2}$}
\put(13,22.5){\small$-\frac{\pi}{2}$}
\put(23,26.9){\small$\frac{\pi}{2}$}
\put(28,14){\small$0$}
\put(22,8){\small$\frac{\pi}{2}$}
\put(29,4){\small$\frac{\pi}{2}$}

\put(71,6.2){\small$\gamma$}
\put(61,15.5){\small$\beta$}
\put(65,23.5){\small$\frac{\pi}{2}$}
\put(72,16.5){\small$-\frac{\pi}{2}$}
\put(82.3,26.9){\small$\frac{\pi}{2}$}
\put(69.2,27.6){\small$\frac{\pi}{2}$}

\end{overpic}
\vspace{-0.05cm}
\caption{Example of holonomy-aware extra path computation. Left: a tree of cones with computed $\rho$-values is shown in black, dashed. Path $\gamma$ from boundary to boundary, crossing two tree branches, has a holonomy value $\sum_\gamma \rho = \nicefrac{\pi}{2}$. This path is closed along the boundary by $\beta$ (with $\sum_\beta\rho = 0$), forming $n=3$ corners. As $\sum_{\gamma+\beta}\rho = \nicefrac{\pi}{2}$ and $n=3$ conforms with Gauss-Bonnet \eqref{eq:gaussdisk}, the path $\gamma$ can be accepted. Right: to illustrate that the tree of cones can be chosen arbitrarily, here the same situation is depicted with a different tree. We have $\sum_{\gamma}\rho = 0$ and $\sum_\beta\rho = \nicefrac{\pi}{2}$, thus again $\sum_{\gamma+\beta}\rho = \nicefrac{\pi}{2}$.}
\label{fig:periodtree}    
\end{figure}  

Now for an arbitrary simple closed clockwise dual edge path $\gamma$, we have the following important property: $\sum_{e\in\gamma} \rho_e = \sum_{p_i\in\Gamma} \hat\Theta_i$, where $\Gamma$ is the set of all vertices enclosed by $\gamma$ \cite{CraneDS10}. This sum $\sum_{e\in\gamma} \rho_e$ is called, with slight abuse of terminology also for non-closed paths, \emph{holonomy} in the following. Intuitively, the sum $\sum_{e\in\gamma} \rho_e$ along a closed path tells us what total cone curvature is contained in the region enclosed by the path.

We then employ Dijkstra's algorithm, starting from a hole segment on $\partial M'$, and keep track of the holonomy values along the way. Whenever Dijkstra's front propagation reaches $\partial M'$ again, we tentatively close the loop by walking back to the starting point clockwise along $\partial M'$, counting passed corners on the way, and checking whether the total holonomy matches the number of corners, cf.\ Sec.~\ref{sec:cg}, Eq.~\eqref{eq:gaussdisk}. If it matches, the path is accepted and added as extra cut. An example is shown in Fig.~\ref{fig:periodtree}.

The following needs to be taken into account in this though: Dijkstra's algorithm keeps track of, for each vertex, the shortest path back to the starting point -- regardless of holonomy $\sum \rho$.
So while there are shortest paths of different holonomy back to the starting point, Dijkstra's algorithm discards all but the shortest one.
 We, instead, need to keep track of the shortest path \emph{per vertex per holonomy}. Otherwise paths that could end up having a suitable holonomy in the end, could already be discarded in favor of a shorter path with an unsuitable holonomy. We thus perform Dijkstra's algorithm not on $M'$, but on a branched covering of $M'$, with sheets glued according to $\rho$ \cite{kalberer2007qsp}. In practice this simply means that each triangle stores separate distance information 
 per holonomy class, indexed by the 
 value $\sum \rho$
 of incoming fronts.
 
 An additional measure needs to be taken because we need $\gamma$ to be simple on $M'$. While the holonomy-aware version of Dijkstra's algorithm yields a path that is simple on the covering, its projection down onto $M'$ may be self-intersecting. Before advancing the front to the next vertex we thus always check whether this vertex is already contained in the predecessor path to prevent such self-intersections.
 While with this modification one can no longer guarantee that a path is always found, one can always fall back to an explicit path construction following the existence proof in Sec.~\ref{sec:gen3}; we have never encountered a case where this was necessary.

\paragraph{Special Cases}

The connectors of the special cutgraph patterns employed for genus 1 and genus 2 surfaces are realized using shortest paths as well. These are constructed incrementally between endpoints chosen on the holes, and cross points chosen on other connectors. For those paths that finally split the surface, again the above cone-aware shortest path algorithm is employed to ensure cone/corner compatibility.

\subsection{Conformal Map}
\label{sec:discconf}
\label{sec:62}

After cutting $M$ using the cutgraph $G$ to obtain $M'$, for each component of $M'$ (typically one or two, except for some genus 2 cone configurations) we need to obtain a cone metric with rectilinear boundary.

This is the one part of the implementation, where --  even though in the continuous case (cf.~Sec.~\ref{sec:conemetrix}) things are rather straightforward -- achieving robustness is actually a challenge.
In contrast to the continuous case, questions of existence (or even just precise definition) of \emph{discrete} conformal metrics with prescribed cones and boundary curvature are not entirely settled.
For cases without boundary, recent results have brought insight \cite{Luo:2004,Gu:2013,Gu:2014,Springborn:2017}, but the case with boundary requires further work on the theory side.

A method for handling the discrete case without boundary was recently described in \cite{Campen:2017:SimilarityMaps}; it combines the elegant method proposed in \cite{Springborn:2008} with on-demand mesh modifications (edge flips, following \cite{Luo:2004}). It can be used for the genus 0 case as is.
With a minor extension (as follows) we additionally prescribe geodesic boundary curvature, and find that it works for this case as well.
But as mentioned earlier, further work is necessary to determine whether guarantees (regarding general existence, termination, etc.) can be formally established. We remark that our overall seamless parametrization construction does not at all rely on the cone metric being conformal -- this was merely a convenient natural choice for the theoretical considerations -- thus the concrete notion of discrete conformal equivalence employed is irrelevant in our context, which may simplify the situation.

The boundary curvature prescription is performed using the holonomy angle constraints offered by this method; however, not for homology loops, but for each boundary vertex's triangle fan, prescribing $\nicefrac{\pi}{2}$ at corners and $\pi$ at all other boundary vertices.

Another recently proposed conformal mapping method \cite{Sawhney:2017:BFF} supporting cone and boundary curvature prescription is particularly efficient -- but does not include remeshing capabilities inevitably required for full robustness.  One could think about employing a hybrid solution in practice, where the efficient algorithm is tried first while the more general one serves as fallback.

Note that, if this is important in a use case, the edge flips which are performed by the conformal metric computation algorithm can ultimately be realized by means of edge splits, as described in \cite{Fisher:2007}. In this way the output mesh is a locally refined version of the input mesh (rather than a mesh with arbitrarily different combinatorial structure) and its embedding is easily preserved.

\subsection{Equalization}
\label{sec:63}
We (arbitrarily) number the segments of $\partial M'$ and set up the system matrix $A$ \eqref{eq:equas} accordingly, with one equation $\eqref{eq:equa}$ for each pair of mates (or equation $\eqref{eq:equaGeneralw}$ where T-nodes are involved, cf.~Sec.~\ref{sec:gen3}).
The righthand side $\bm{b}$ is determined by measuring the lengths of the segments under the metric computed in Sec.~\ref{sec:discconf}.

Then the linear system $A\bm{w}=\bm{b}$ is solved. As it is underdetermined, we compute the least-norm solution $\bm{w}^*$ via $A^TA\bm{w}=A^T\bm{b}$. 
The resulting solution $\bm{w}^*$ does not generally satisfy the important non-negativity constraint of problem \eqref{eq:equas}.
However:
\begin{prop}
\label{th:constant}
If cutgraph $G$ contains no T-nodes, $\bm{w}^+ = \bm{w}^* + \lambda\bm{1}$, with $\lambda = -\min \bm{w}^*$, satisfies \eqref{eq:equas}, i.e. $A\bm{w}^+=\bm{b}$ and $\bm{w}^+ \geq 0$.
\end{prop}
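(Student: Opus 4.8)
The plan is to reduce everything to a single kernel property of the system matrix $A$: in the absence of T-nodes, the all-ones vector $\bm{1}$ lies in $\ker A$. First I would observe that without T-nodes every segment is simple and mated with exactly one partner, so each row of \eqref{eq:equas} is an instance of the length-equalization equation \eqref{eq:equa}. Such a row carries coefficient $+1$ in columns $\mathrm{prev}(i)$ and $\mathrm{next}(i)$, coefficient $-1$ in columns $\mathrm{prev}(j)$ and $\mathrm{next}(j)$, and zero elsewhere. Whether or not some of these four column indices coincide --- a coincidence merely merges the corresponding coefficients --- the sum of all entries in the row is $(+1)+(+1)+(-1)+(-1)=0$. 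Hence every row of $A$ sums to zero, which is precisely the statement $A\bm{1}=\bm{0}$.

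Second, I would check that $\bm{w}^*$ is an exact solution of $A\bm{w}=\bm{b}$ rather than only a least-squares minimizer. A T-node-free cutgraph is used exactly in the fourfold (or cone-free) case, for which Prop.~\ref{prop:fourfold} guarantees that $\bm{b}$ is feasible; thus $A\bm{w}=\bm{b}$ is consistent. For a consistent system, every solution of the normal equations $A^{T}A\bm{w}=A^{T}\bm{b}$ has zero residual and therefore also solves $A\bm{w}=\bm{b}$; in particular $A\bm{w}^*=\bm{b}$.

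Combining the two facts, the linear part of the claim is immediate, since shifting by a multiple of a null vector does not change the image:
\[
A\bm{w}^{+}=A(\bm{w}^*+\lambda\bm{1})=A\bm{w}^*+\lambda\,A\bm{1}=\bm{b}+\bm{0}=\bm{b}.
\]
For the non-negativity part I would invoke the specific choice $\lambda=-\min\bm{w}^*$: the smallest entry of $\bm{w}^{+}$ is then $\min\bm{w}^*+\lambda=0$, and all remaining entries are at least as large, so $\bm{w}^{+}\ge\bm{0}$. Both requirements of \eqref{eq:equas} are met.

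I do not expect a genuine obstacle here; the only step needing care is the robust derivation of $A\bm{1}=\bm{0}$, which must be stated so as to survive possible coincidences among $\mathrm{prev}(i),\mathrm{next}(i),\mathrm{prev}(j),\mathrm{next}(j)$ (these only combine coefficients, never change their sum). It is equally worth noting why the hypothesis is needed: once T-nodes are present, some rows take the general form \eqref{eq:equaGeneralw} with coefficient sum $2-2|J_i|$, which is nonzero as soon as a complex segment is mated to more than one simple segment; then $\bm{1}\notin\ker A$ and this uniform-shift argument no longer applies.
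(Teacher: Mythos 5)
Your proof is correct and takes essentially the same route as the paper: both arguments rest on the observation that each row of $A$ has two $+1$ and two $-1$ entries (possibly merged), hence $A\bm{1}=\bm{0}$, so the constant shift preserves $A\bm{w}=\bm{b}$ while the choice $\lambda=-\min\bm{w}^*$ forces non-negativity. Your additional step verifying that the least-norm solution of the normal equations actually satisfies $A\bm{w}^*=\bm{b}$ (via consistency of the system, established elsewhere in the paper) fills in a detail the paper leaves implicit, but it does not change the approach.
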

\proof{Each equation $\eqref{eq:equa}$, thus each row of $A$, involves two (not necessarily distinct) variables with positive sign and two with negative sign. We thus have $A\bm{1} = \bm{0}$, i.e. globally constant padding (e.g. by width 1) is in the kernel of $A$, it does not affect equalization. Hence, $A\bm{w}^+ = A\bm{w}^* + \lambda A\bm{1} = A\bm{w}^* = b$ for any $\lambda$. With the above choice of $\lambda$, obviously $\bm{w}^+_i \geq 0\,\forall i$.}\qed

This means we can add a sufficiently large global value to the initial solution $\bm{w}^*$ in order to obtain a non-negative solution.

One class of cutgraphs (cf.~Sec.~\ref{sec:gen3}) contains one extra path that forms two T-nodes. This leads to one or two general equations \eqref{eq:equaGeneralw} with four or six negative variables.
Note that the additional negative sign entries correspond to the variables $w_x, w_{x'}$ of the extra path's segments $s_x, s_{x'}$ -- if we removed these variables, each equation would have two positive and two negative variables. Hence, while $A\bm{1} \neq \bm{0}$, we have $A\bm{1'} = \bm{0}$, where $\bm{1'}$ is the vector of ones except for zeroes at positions $x$ and $x'$.
Let $\bar A$ be the matrix obtained from $A$ by zeroing the two columns corresponding to $w_x, w_{x'}$.

\begin{prop}
\label{th:constant}
If cutgraph $G$ is a hole chain with extra path, $\bm{w}^+ = \bm{w}^* + \lambda\bm{1'}$, with $\lambda = -\min \bm{w}^*$ and $\bm{w}^*$ the least-norm solution of $\bar A \bm{w} = \bm{b}$, satisfies \eqref{eq:equas}, i.e. $A\bm{w}^+=\bm{b}$ and $\bm{w}^+ \geq 0$.
\end{prop}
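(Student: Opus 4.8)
The plan is to mirror the kernel-shift argument of the T-node-free case, adapting it to the two complications present here: that $A\bm{1}\neq\bm{0}$, and that $\bm{w}^*$ is defined as a solution of the \emph{modified} system $\bar A\bm{w}=\bm{b}$ rather than of $A\bm{w}=\bm{b}$ itself. Concretely, I must verify the two requirements of \eqref{eq:equas} separately: the equality $A\bm{w}^+=\bm{b}$ and the non-negativity $\bm{w}^+\geq\bm{0}$.

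The linchpin, which I would establish first, is that the least-norm solution $\bm{w}^*$ of $\bar A\bm{w}=\bm{b}$ necessarily has vanishing extra-path components, $w^*_x=w^*_{x'}=0$. Since $\bar A$ arises from $A$ by zeroing the columns indexed $x$ and $x'$, the standard basis vectors $\bm{e}_x,\bm{e}_{x'}$ lie in $\ker\bar A$. A least-norm solution is by definition orthogonal to the kernel, so $\langle\bm{w}^*,\bm{e}_x\rangle=w^*_x=0$ and likewise $w^*_{x'}=0$. The decomposition $A\bm{w}^*=\bar A\bm{w}^*+w^*_x\,A\bm{e}_x+w^*_{x'}\,A\bm{e}_{x'}$ then collapses to $A\bm{w}^*=\bar A\bm{w}^*=\bm{b}$; that is, although $\bm{w}^*$ was only asked to solve the modified system, it in fact already solves the original equality constraint.

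With this in hand, the equality $A\bm{w}^+=\bm{b}$ follows by linearity together with the already-established identity $A\bm{1'}=\bm{0}$: one has $A\bm{w}^+=A\bm{w}^*+\lambda\,A\bm{1'}=\bm{b}+\bm{0}=\bm{b}$ for any $\lambda$, with $\bm{1'}$ here playing exactly the role that the all-ones kernel vector $\bm{1}$ played in the T-node-free case. For non-negativity I would use that $w^*_x=0$ forces $\min\bm{w}^*\leq 0$, hence the prescribed shift $\lambda=-\min\bm{w}^*\geq 0$. For an index $j\notin\{x,x'\}$ the $j$-th entry of $\bm{1'}$ is $1$, so $w^+_j=w^*_j+\lambda\geq\min\bm{w}^*+\lambda=0$; for $j\in\{x,x'\}$ the entry of $\bm{1'}$ is $0$, so $w^+_j=w^*_j=0\geq 0$. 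Thus $\bm{w}^+\geq\bm{0}$ componentwise, completing the verification of \eqref{eq:equas}.

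I expect the one genuinely load-bearing step to be the identification $w^*_x=w^*_{x'}=0$: it is what allows the single modified matrix $\bar A$ to serve two purposes at once — $\bm{w}^*$ solves $\bar A\bm{w}=\bm{b}$, yet simultaneously satisfies $A\bm{w}^*=\bm{b}$ — so that the tailored shift by $\lambda\bm{1'}\in\ker A$ can restore non-negativity without disturbing feasibility. A subsidiary point to confirm is that deleting columns $x$ and $x'$ turns each T-node equation \eqref{eq:equaGeneralw} into a row with two $+1$ and two $-1$ unit entries; this is precisely what makes every row of $\bar A$ sum to zero and thereby underpins both $A\bm{1'}=\bm{0}$ and the applicability of the kernel-shift step.
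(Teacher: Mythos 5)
Your proof is correct and follows essentially the same route as the paper's: establish $w^*_x=w^*_{x'}=0$ so that $A\bm{w}^*=\bar A\bm{w}^*=\bm{b}$, then shift by $\lambda\bm{1'}\in\ker A$. You in fact justify the key step $w^*_x=w^*_{x'}=0$ more explicitly than the paper (orthogonality of the least-norm solution to $\ker\bar A$, which contains $\bm{e}_x,\bm{e}_{x'}$ since those columns are zeroed); the only ingredient the paper states that you leave implicit is that feasibility of $\bar A\bm{w}=\bm{b}$ -- needed for the least-norm solution to exist at all -- is inherited from the appendix result that $A\bm{w}=\bm{b}$ admits a solution with $w_x=w_{x'}=0$.
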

\proof{
$A \bm{w} = \bm{b}$ is feasible with $w_x = w_{x'} = 0$ as proven in Sec.~\ref{sec:eqproof}, thus $\bar A \bm{w} = \bm{b}$ is feasible and $\bm{w}^*$, being a least-norm solution, has $w^*_x = w^*_{x'} = 0$, thus $\bar A \bm{w}^* = A \bm{w}^*$. 
Hence, $A\bm{w}^+ = A\bm{w}^* + \lambda A\bm{1'} = \bar A\bm{w}^* = \bm{b}$ for any $\lambda$. With the above choice of $\lambda$, obviously $\bm{w}^+_i \geq~0\,\forall i\neq x,x'$, and $w^+_x = w^+_{x'} = 0$.}\qed

\subsection{Flattening}
\label{sec:64}

In order to obtain the map $F$ according to the conformal cone metric, we first need to extend the cutgraph $G$ to $G^T$ (cf.~Sec.~\ref{sec:mtop}).
In each component of $M'$ we pick a point $p$ (which is not a corner) on some segment, and compute shortest paths from $p$ to all cones in the component. The union of these paths forms the tree $T$ to extend $G$ to $G^T$. Note that the  piecewise-linear form of padding we use in the discrete setting (cf.~Sec.~\ref{sec:65}) does not require $T$ to meet the segment specifically at right angles.

The conformal metric computed in Sec.~\ref{sec:discconf} is then flat in all of $M^c$, so its components can be laid out in the plane \cite{Springborn:2008}, isometrically with respect to the metric, to obtain $F$.

\begin{figure}[tb]
\centering
\begin{overpic}[width=0.99\columnwidth]{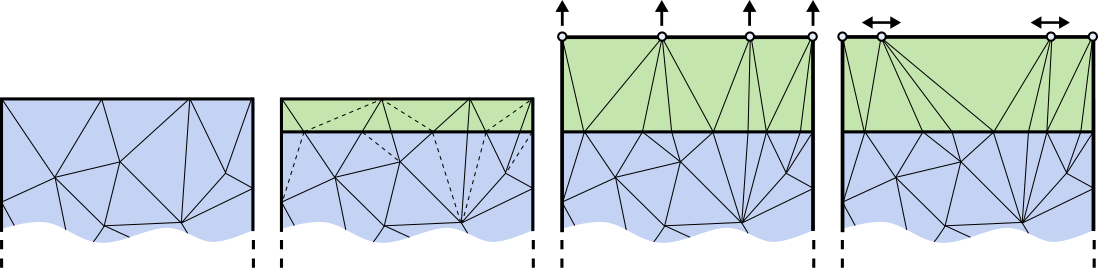}
\put(10.5,-2){\small a)}
\put(36,-2){\small b)}
\put(61.5,-2){\small c)}
\put(87,-2){\small d)}
\end{overpic}
\caption{a) Mesh near a segment (top) that is to be padded. b) The strip to be stretched (green) is formed by inserting a straight line into the triangulation (by splitting edges at the intersections), so close to the segment that no vertex is contained. c) The strip is stretched outwards by displacing the vertices that lie on the segment by the desired padding width. d) The vertices on the segment are translated laterally according to $\phi$ (cf.~Sec.~\ref{sec:padding}) to achieve pointwise seamlessness.
}
\label{fig:paddiscrete}    
\end{figure}

\begin{figure*}[t]
\centering
\begin{overpic}[width=0.99\linewidth]{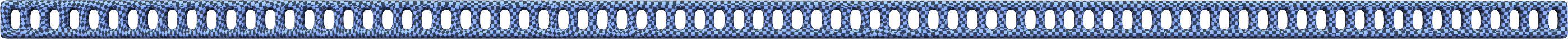}
\end{overpic}
\vspace{-0.25cm}
\caption{A locally injective seamless map generated on an 80-torus.}
\label{fig:80torus} 
\vspace{-0.2cm}   
\end{figure*}  

\begin{figure}[b]
\vspace{-0.1cm}
\centering
\begin{overpic}[width=0.99\linewidth]{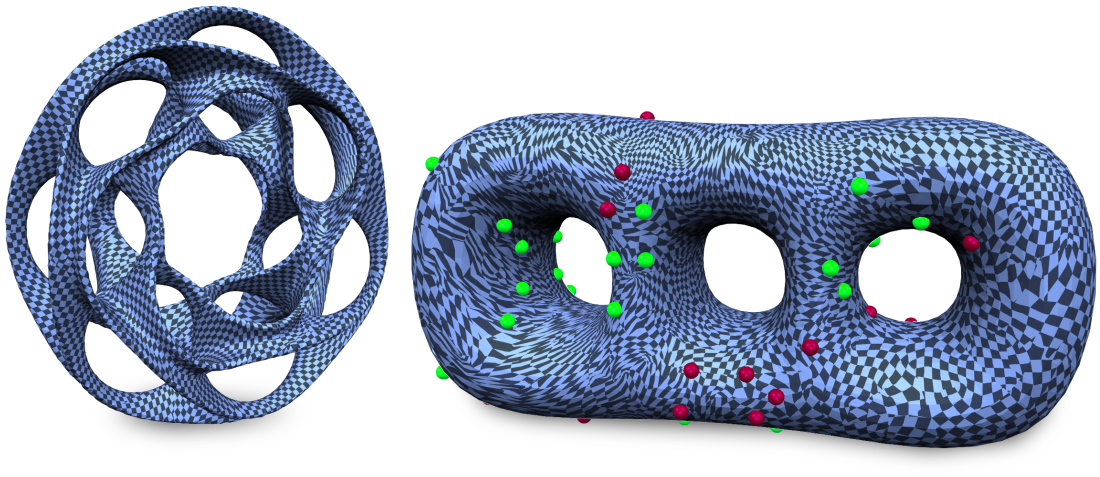}
\end{overpic}
\vspace{-0.2cm}
\caption{Left: example map generated on a topologically complex surface. Right: Example map generated with geometrically non-meaningful cone prescription (here: 50 randomly distributed cones of curvatures $\pi$ and $-\pi$).}
\label{fig:random}    
\end{figure}  

\begin{figure*}[t]
\centering
\begin{overpic}[width=0.91\linewidth]{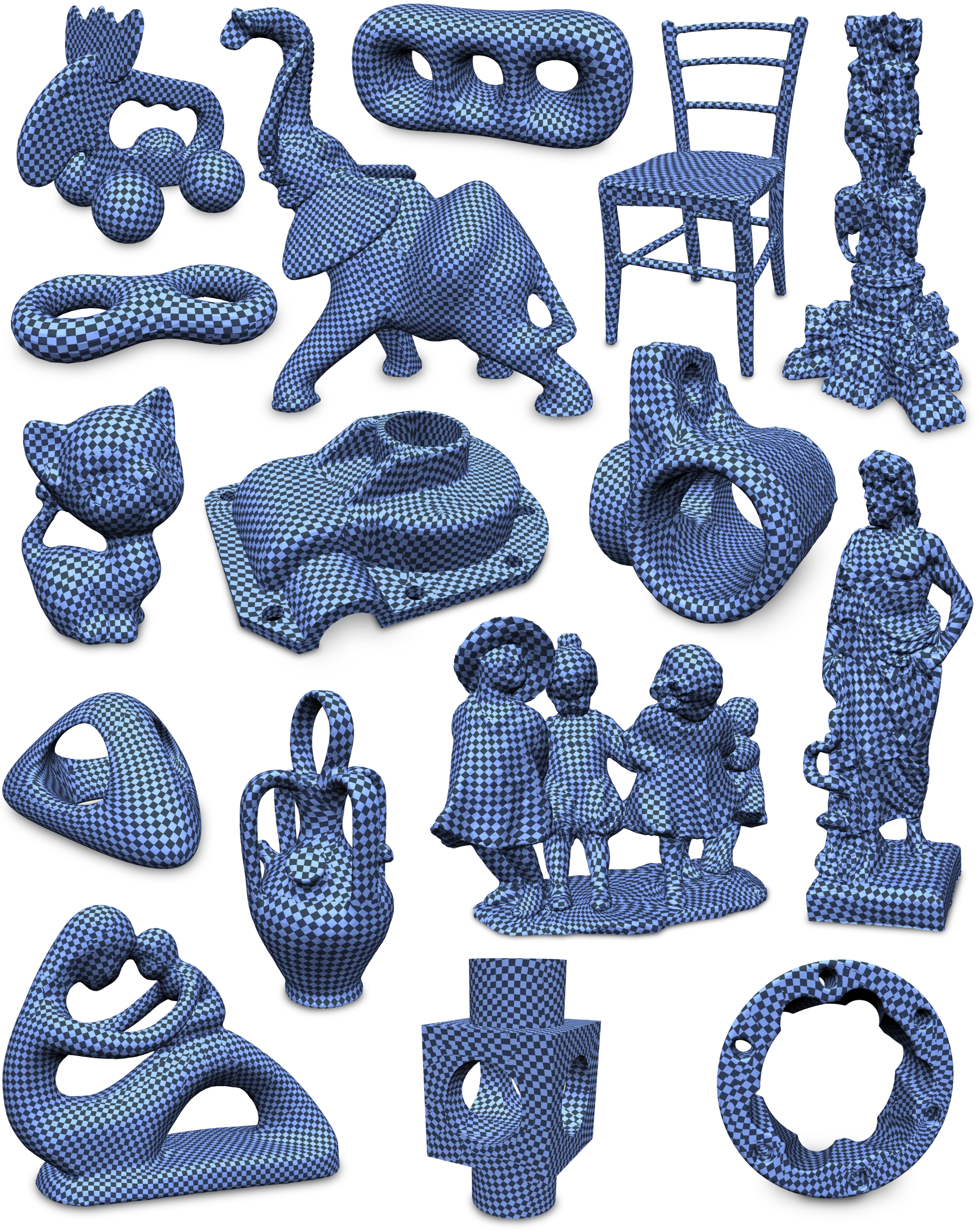}
\end{overpic}
\caption{Visualization of a variety of locally injective seamless parametrizations obtained using our method}
\label{fig:results}    
\vspace{-0.5cm}
\end{figure*}

\subsection{Padding}
\label{sec:65}

The padding operation described in Sec.~\ref{sec:pad1} can easily be performed in a piecewise linear setting as illustrated in Fig.~\ref{fig:paddiscrete}: After inserting a parametrically straight line into the mesh $M$ along a segment (such that no vertex is contained in the resulting strip), the stretching and lateral shifting can be performed by simply relocating those vertices that are on the segment. Note that, as no vertices lie in the strip, the mesh within the rectangular strip is a simple triangle strip and laterally translating the segment vertices does not cause triangle inversions as long as their order is preserved -- which is the case with the reparametrization $\phi$ (cf.~Sec.~\ref{sec:padding}).

We note that the resulting seamless parametrization is not of immediate practical use: the scale distortion involved in the conformal map together with the additional padding-induced stretching often leads to high parametric distortion.
This map, however, provides the valid (locally injective and seamless) starting point required by robust optimization methods that can convert it to a low-distortion parametrization (to the extend permitted by the cone prescription) as detailed in the following.

\subsection{Distortion Optimization}
\label{sec:66}

For the optimization of the seamless padded map, in our implementation we use the symmetric Dirichlet energy together with quadratic proxies for efficiency as described in \cite{Rabinovich:2017:SLI}. This method preserves local injectivity during optimization by design, and we additionally impose linear seamlessness constraints to preserve seamlessness of the map:
$$\vec{e_i} = R^{k_{ij}\frac{\pi}{2}}\vec{e_j}\quad\text{for each pair $(i,j)$ of identified mesh edges},$$
where $\vec{e_i}$ is the edge vector of edge $i$ in the parametric domain, and $R^{k_{ij}\frac{\pi}{2}}$ is a rotation by $k_{ij}\frac{\pi}{2}$, where the constant integer $k_{ij}$ is determined by the edges' relative initial orientation in the  domain.

In essence, our method provides the feasible starting point required by such nonconvex distortion minimization techniques.

\section{Examples}

We demonstrate our implementation of the presented algorithm on a number of examples.

Figure \ref{fig:results} shows a visualization of the seamless parametrizations constructed on models from the dataset provided by Myles et al.~\shortcite{Myles:2014}. We employed the cone position and angle prescriptions provided with that dataset.
Figure \ref{fig:random} demonstrates the algorithm handling topologically complex surfaces as well as randomly prescribed singularities.
In all cases locally injective seamless global parametrization were obtained. Note that seamless does not mean that cuts are not visible in these checkerboard visualizations; for this the maps would additionally need to be quantized \cite{CampenBK15} -- a process for which our method provides a suitable initialization.

In order to explore the numeric limits of our implementation, we applied it to $N$-tori, for increasing $N$. For an 80-torus, as depicted in Figure \ref{fig:80torus}, the implementation succeeds; for a 100-torus we are still able to obtain an initial seamless map -- however, with a level of distortion that state-of-the-art local injectivity preserving optimization methods prove to have trouble with, due to numerical precision issues. For even larger $N$, the computation of the constrained conformal map starts to occasionally suffer from numerical issues (e.g. step size going down to numerically zero) as well. Investigation of such numerical aspects of map optimization in high distortion cases is  an important direction for future research.

\section{Conclusion and Future Work}
\label{sec:conclusion}

This paper provides a general path to obtaining seamless parametrizations with a given set of cones. On a conceptual level the approach is simple (just pad a cut-aligned map), and we hope it provides some new insights into how global parametrizations can be constructed. 

A conceptual limitation in its current form is its unawareness of holonomy angles on homology loops (in addition to local cone angles), which, for instance, is important for parametrizations following a global guiding field. We expect that by using different forms of cutgraph construction, based on given global holonomy angles, many of the ideas herein will be applicable to such a setting as well; we plan to address this is a separate paper. 

Related directions of future work include generalization to surfaces with boundaries as well as aligning to tagged feature curves or other prescribed directions on the surface.

In the smooth setting, our algorithm is guaranteed to always yield a valid, locally injective seamless parametrization. In practice, numerical optimization routines notoriously bring about challenges due to numerical precision limits, which applies here to the discrete conformal map computation.
As in this discrete setting there furthermore are some unclarities concerning definition and existence of general conformal maps, a potential path could be the replacement of this initial map computation with a different technique -- exploiting the fact that conformality is not actually required.

\vspace{-0.05cm}
\begin{acks}
H. Shen and D. Zorin are supported by awards NSF IIS-1320635, NSF DMS-1436591, and a gift from Adobe; J. Zhou by a CSC scholarship.
\end{acks}

\vspace{-0.05cm}
\bibliographystyle{ACM-Reference-Format}
\bibliography{param2,param}

\appendix
\section{Proofs of Equalizability}
\label{sec:eqproof}

\subsection{Genus 3+}

The following proof is constructive, yet is only intended to prove feasibility. In practice a simple linear system solve can be used to obtain a solution instead (cf.~Sec.~\ref{sec:63}).

\subsubsection*{Non-Fourfold Case}

Cutgraph $G$ together with the extra path cuts $M$ into two components $M_0'$ and $M_1'$, neither of which has its number 
of segments $m_k$ divisible by 4. We 
assume that a segment of one hole which is in between the odd-couple pair (cf.~Sec.~\ref{sec:holechain}) is in $M_1'$.

Let the segments of each component $M_k'$ be numbered counterclockwise along its boundary, in $M_0'$ from $0$ to $m_0-1$, in $M_1'$ from $m_0$ to $m_0+m_1-1$.
Let $B\bm{w} = \bm{c}$ be a linear system of equations
\begin{figure}[h!]
  \vspace{-0.14cm}
  \hspace{-0.45cm}
         \hfill\begin{overpic}[width=0.168\linewidth]{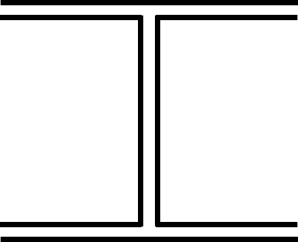}
    \put(35,36){\small 1}
    \put(57,36){\small 1}
    \put(20,59){\small 1}
    \put(72,59){\small 0}
     \put(46,85){\small 1}
          \put(2,36){\small $M_0'$}
          \put(75,36){\small $M_1'$}
    \put(20,12){\small 1}
    \put(72,12){\small 0}
 \put(46,-15){\small 1}
    \end{overpic}
  \vspace{-0.2cm}
\end{figure}
\vspace{-1.55cm}
\begin{minipage}[l]{0.78\linewidth}
$w_{\text{prev}(i)} + w_{\text{next}(i)} = d_i - \ell_i$, $i=0,\dots,m_0+m_1-1$, 
where $d_i = 1$ for each segment, except for the two segments in $M_1'$ which are adjacent to (but not on) the extra path: for these $d_i = 0$, as depicted here.
\end{minipage}

\vspace{0.1cm}
Notice that the matrix $B$ has block structure, as each equation concerns one of the two components only and segment indices are ordered by component:
\vspace{-0.025cm}$$B=\left[\begin{array}{cc}B_0&0\\0&B_1\end{array}\right].$$

\begin{lemma}
If $\bm{w}$ satisfies $B \bm{w} = \bm{c}$ then it satisfies $A\bm{w} = \bm{b}$ of \eqref{eq:equas}.
\end{lemma}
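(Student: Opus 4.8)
The plan is to show that the auxiliary system $B\bm{w}=\bm{c}$ is a refinement of the equalization system $A\bm{w}=\bm{b}$, in the sense that any solution of the former is automatically a solution of the latter. First I would recall the structure of the two systems. The equalization system \eqref{eq:equas} has one equation per pair of mated segments, of the form \eqref{eq:equa} or, where T-nodes are involved, the generalized form \eqref{eq:equaGeneralw}. Each such equation equates the padded length $\ell_i' = w_{\text{prev}(i)} + \ell_i + w_{\text{next}(i)}$ of a segment with the (sum of) padded length(s) of its mate(s). The auxiliary system $B\bm{w}=\bm{c}$, by contrast, prescribes the padded length of \emph{each individual} segment to a fixed target value $d_i$, i.e. $w_{\text{prev}(i)} + \ell_i + w_{\text{next}(i)} = d_i$, hence $w_{\text{prev}(i)} + w_{\text{next}(i)} = d_i - \ell_i$.

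The core of the argument is then purely algebraic: if every segment's padded length equals its prescribed target $d_i$, then checking equality of mated padded lengths reduces to checking equality of the corresponding targets. Concretely, for a simple pair of mates $(s_i,s_j)$, equation \eqref{eq:equa} requires $\ell_i' = \ell_j'$; under $B\bm{w}=\bm{c}$ this becomes $d_i = d_j$. For a complex segment $s_i$ mated with a sequence $J_i$ of simple segments, the generalized equation \eqref{eq:equaGeneral} requires $\ell_i' = \sum_{j\in J_i}\ell_j'$, which under $B\bm{w}=\bm{c}$ becomes $d_i = \sum_{j\in J_i} d_j$. So the lemma holds provided the chosen target values $d_i$ themselves satisfy these mate-consistency relations. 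I would therefore verify that the specific assignment defining $\bm{c}$ — namely $d_i=1$ everywhere except $d_i=0$ at the two segments of $M_1'$ adjacent to the extra path — is consistent across all mate relations. For the generic mated pairs this is immediate since both sides are $1$. The only delicate mates are those touching the two $d_i=0$ segments: here I would confirm that each such segment's mate structure balances, so that the sum relation $d_i=\sum_{j\in J_i}d_j$ still holds with the two zeros placed exactly where the T-nodes of the extra path create complex segments.

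The main obstacle I anticipate is precisely this bookkeeping at the extra path: one must check that the two segments assigned $d_i=0$ are positioned so that, on the mate side, the deficit of $2$ (relative to all-ones) is matched. This is where the earlier requirement that an odd-couple segment lies in $M_1'$, and the way the extra path forms two T-nodes producing one complex segment paired with a sequence containing both zero-target segments, must be invoked. In short, the lemma is a one-line algebraic implication (equal targets $\Rightarrow$ equal padded lengths) once the compatibility of the target vector $\bm{c}$ with the mate incidence structure of the cutgraph is established, and essentially all the work is in confirming that compatibility for the two exceptional segments introduced by the extra path.
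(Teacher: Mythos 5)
Your approach is essentially the paper's: the paper's proof simply observes that $B\bm{w}=\bm{c}$ forces every segment to a fixed target length ($1$, or $0$ for one side of each T-node), and that a mate-consistent assignment of target lengths is a special case of length equalization, so the equations \eqref{eq:equa} and \eqref{eq:equaGeneralw} making up $A\bm{w}=\bm{b}$ follow; your reduction to checking $d_i=\sum_{j\in J_i}d_j$ is exactly this observation made explicit. Two details in your anticipated bookkeeping are off, however. First, the odd-couple condition plays no role in this lemma; it is invoked only later in the equalizability proof, when one equation of $B$ is deliberately violated and must be repaired without disturbing already-equalized pairs. Second, the extra path creates \emph{two} T-nodes and hence two complex segments, one per T-node: each complex segment lies on the far side of the hole-chain branch split by that T-node, and its mate sequence $J_i$ consists of the two simple segments on the near (extra-path) side, of which exactly one lies in $M_1'$ adjacent to the extra path and carries $d=0$ while the other lies in $M_0'$ and carries $d=1$, giving $1=0+1$. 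It is not the case that a single complex segment absorbs both zero-target segments. With these corrections the consistency check goes through for every mate relation (including the extra path's own two sides, which are simple mates both with $d=1$), and the lemma follows as you outline.
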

\proof
System $B \bm{w} = \bm{c}$ enforces unit length for all segments (except zero length for one side of each T-node). As this is a (special) case of length equalization, equations \eqref{eq:equa} and \eqref{eq:equaGeneralw}, which $A\bm{w} = \bm{b}$ consists of, are satisfied by $\bm{w}$ that solves $B \bm{w} = \bm{c}$ .\qed

\begin{lemma}
System matrix $B$ has full rank.
\end{lemma}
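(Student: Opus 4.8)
The plan is to exploit the block-diagonal form $B = \mathrm{diag}(B_0,B_1)$ already exhibited, so that $\mathrm{rank}\,B = \mathrm{rank}\,B_0 + \mathrm{rank}\,B_1$ and it suffices to prove that each diagonal block $B_k$ ($k=0,1$) is nonsingular. First I would read off the entries of $B_k$: the equation indexed by segment $i$ reads $w_{\mathrm{prev}(i)} + w_{\mathrm{next}(i)} = c_i$, so row $i$ of $B_k$ carries a $1$ in the two columns $\mathrm{prev}(i)$ and $\mathrm{next}(i)$ and zeros elsewhere. Since $M_k'$ is a topological disk, its boundary is a single circle, so its $m_k$ segments form one cyclic sequence with $\mathrm{prev}(i) = i-1$ and $\mathrm{next}(i) = i+1$ taken modulo $m_k$. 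Hence $B_k$ is precisely the adjacency matrix of the cycle graph $C_{m_k}$, equivalently the circulant matrix whose first row is $(0,1,0,\dots,0,1)$.

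The key step is then a standard circulant computation. Writing $\omega = e^{2\pi i/m_k}$, the eigenvalues of $B_k$ are $\lambda_j = \omega^{j} + \omega^{-j} = 2\cos(2\pi j/m_k)$ for $j = 0,\dots,m_k-1$, so that $\det B_k = \prod_{j=0}^{m_k-1} 2\cos(2\pi j/m_k)$. This product vanishes exactly when some factor does, i.e.\ when $\cos(2\pi j/m_k) = 0$ for some $j$ in range, which means $4j = m_k(2t+1)$ for integers $j,t$. I would then check that such a solution with $0 \le j < m_k$ exists if and only if $4 \mid m_k$: the ``if'' direction takes $j = m_k/4$, and the ``only if'' direction is a short parity argument (for $m_k$ odd the right side is odd while the left is even; for $m_k \equiv 2 \pmod 4$, writing $m_k = 2m'$ with $m'$ odd reduces to $2j = m'(2t+1)$, again odd $=$ even). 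Therefore $B_k$ is nonsingular precisely when $m_k$ is not divisible by $4$.

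Finally I would invoke admissibility: the number of segments of $M_k'$ equals its number of corners $n_k$ (segments and corners alternate around the boundary circle, so there are equally many), and the extra path was required to yield corner counts $n_0,n_1$ not divisible by $4$. Thus $4 \nmid m_0$ and $4 \nmid m_1$, both blocks are nonsingular, and $\det B = \det B_0\,\det B_1 \neq 0$. The only additional care needed is for degenerate small components $m_k \in \{1,2\}$, where $\mathrm{prev}(i)$ and $\mathrm{next}(i)$ coincide and the corresponding entry equals $2$; these I would dispatch by direct inspection (the $1\times1$ block $[2]$ and the $2\times2$ block with zero diagonal and off-diagonal $2$ have determinants $2$ and $-4$, both nonzero, consistent with $1,2 \not\equiv 0 \pmod 4$). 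I expect the main conceptual point to be recognizing that the ``not divisible by $4$'' admissibility condition is \emph{exactly} the nonsingularity criterion for the cycle-adjacency matrix; once the circulant structure is identified, everything else is a routine eigenvalue computation.
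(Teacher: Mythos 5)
Your proof is correct and follows essentially the same route as the paper: both reduce to the block-diagonal structure, recognize each block as a circulant (your eigenvalues $\omega^j+\omega^{-j}=2\cos(2\pi j/m_k)$ are exactly the paper's associated polynomial $f(x)=x+x^{m_k-1}$ evaluated at the $m_k$-th roots of unity), and derive the same criterion that singularity occurs precisely when $4 \mid m_k$, which admissibility of the extra path rules out. Your explicit handling of the degenerate $m_k\in\{1,2\}$ cases and the segment-count-equals-corner-count observation are minor additions the paper leaves implicit, but the argument is the same.
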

\proof
It suffices to show that both blocks, $B_0$ and $B_1$, have full rank.
Each $B_k$ is a \emph{circulant} matrix, with \emph{associated polynomial} $f(x) = x + x^{m_k-1} = x(x^{m_k-2}+1)$ \cite{davis2012circulant}. It is full rank whenever 
$f(e^{\nicefrac{\,j}{m_k}2\pi\mathrm{i}}) \neq 0$, for $j=0\ldots m_k-1,$ 
as its determinant is given by the product of these values ($f$ on $m_k^\text{th}$ roots of unity). It is straightforward to check that $f(e^{\nicefrac{\,j}{m_k}2\pi\mathrm{i}}) = 0$ (for some $j$) requires $e^{\nicefrac{\,j(m_k-2)}{m_k}2\pi\mathrm{i}} = -1$, which requires $2\nicefrac{j(m_k-2)}{m_k}$ to be odd, which requires $\nicefrac{4j}{m_k}$ to be odd, which requires $m_k$ to be a multiple of 4, which is not the case here by construction of the extra cut path.
\qed

Hence, $B$ admits a (unique) solution $\bm{\hat{w}}$.
Notice, however, that $\bm{\hat{w}}$ is not generally non-negative, and -- as the padding variables of the two extra path segments are not zero (and potentially negative) -- we cannot simply remedy this by a constant shift as in Prop.~\ref{th:constant}.

We thus proceed as follows:
\begin{itemize}
\item By choosing different (non-unit) $d_i$ for righthand side $\bm{\bar c}$ (which, however, still imply length equalization), we can cause one of the two extra cut path padding variables to be zero.
\item By violating one equation of the system, we can bring the second one to zero as well.
\item Exploiting the odd-couple condition, we then modify the padding widths to repair the violation, in the sense that, while the (unnecessarily strong) conditions of system~$B$ are no longer satisfied, the conditions of system $A$ are.
\item Finally we obtain a non-negative solution to $A\bm{w}=\bm{b}$ through a (now possible) constant shift.
\end{itemize}

Let $e_0$ be the extra cut path segment in $M_0'$. Choose $\mu = - {\hat{w}}_{e_0}$ (the current padding width) and let $\bm{\bar w}_0 = \bm{\hat w}_0 + \mu\bm{1}$; then $(\bm{\bar w}_0)_{e_0} = 0$, i.e. now the padding width of this segment is zero.

As $B\bm{1} = \bm{2}$, we have $B_0 \bm{\bar w}_0  = \bm{\bar c}_0 =  \bm{c}_0 + \mu\bm{2}$.
Solve $B_1 \bm{\bar w}_1 = \bm{\bar c}_1$, where $\bm{\bar c}_1 = \bm{c}_1 \!+\! \mu\bm{2}$ except for two entries with $d_i \!=\! 0$; these remain as in $\bm{c}_1$.

Then it holds:\vspace{-0.2cm}
$$\left[\begin{array}{cc}B_0&0\\0&B_1\end{array}\right]    \left[\begin{array}{c}\bm{\bar{w}}_0\\\bm{\bar w}_1 \end{array}\right]    =     \left[\begin{array}{c}\bm{\bar c}_0\\\bm{\bar c}_1\\\end{array}\right]$$

As $\bm{\bar w}$ leads to each segment being padded to length $1 + 2\mu$ (except for one segment adjacent to each T-stem, which attains zero length), thus again to a (special) case of equalization, we have $A \bm{\bar w} = \bm{b}$.

Let $\bm{1}^{(i)}$ be the vector of all zeroes, except for a single $1$ at position~$i$.
It holds $B_1 \bm{1}^{(i)} = \bm{1}^{(i-1)} + \bm{1}^{(i+1)}$.

Let $\bm{k} = \bm{1}^{(k+1)} - \bm{1}^{(k+3)} + \bm{1}^{(k+5)} - \bm{1}^{(k+7)} \dots + \bm{1}^{(k-1)}$ (where indices are cyclical, i.e. taken $\!\!\!\mod m_1$).
Then $B_1 \bm{k} = 2\cdot\bm{1}^{(k)}$.

Pick any hole segment $k$ in $M_1'$. Let $\bm{\tilde w}_1 = \bm{\bar w}_1 -\frac{\bm{\bar w}_{e_1}}{\bm{k}_{e_1}} \bm{k}.$
Then $(\bm{\tilde w}_1)_{e_1} = 0$, i.e. now the padding variable of the extra cut path's other segment is zero, too.

Note that picking a hole segment rather than a connector segment as $k$ is important, because in the case that $m_1$ is divisible by 2, half of the elements of $\bm{k}$ are zero; $k$ and $e_1$ having the same parity would lead to a division by zero above.

Then:\vspace{-0.4cm}
$$\left[\begin{array}{cc}B_0&0\\0&B_1\end{array}\right]    \left[\begin{array}{c}\bm{\bar{w}}_0\\\bm{\tilde w}_1 \end{array}\right]    =     \left[\begin{array}{c}\bm{\bar c}_0\\\bm{\bar c}_1 - 2\frac{\bm{\bar w}_{e_1}}{\bm{k}_{e_1}}\bm{1}^{(k)}  \\\end{array}\right]$$
i.e. these padding widths equalize all segments except for the one pair involving segment $k$.
We choose $k$ such that it is the segment of a hole which is in between the odd-couple pair (cf.~Sec.~\ref{sec:holechain}). Then it is easy (as illustrated in Fig.~\ref{fig:oddcouple}, with local indices) to determine some additional padding values $\pm\delta$ whose addition to $\bm{\tilde w}$ finally yields a vector $\bm{w}^*$ of padding widths which
\begin{itemize}
\item equalize all segments, i.e. $A\bm{w}^* = b$,
\item are zero for both extra cut path segments,
\item but are not generally non-negative.
\end{itemize}

As shown in Sec.~\ref{sec:63}, we can then add a sufficiently large constant padding width to all elements of $\bm{w}^*$ except the two of extra cut path segments to obtain a non-negative solution as required.\qed

\begin{figure}[tb]
\centering
\begin{overpic}[width=0.99\columnwidth]{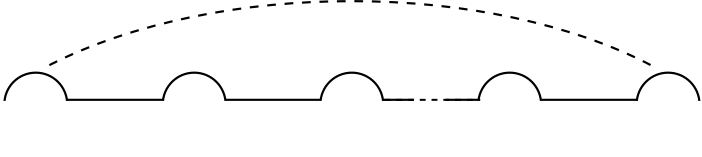}
\put(3,14){\small$\bar\ell_0$}
\put(26,14){\small$\bar\ell_1$}
\put(49,14){\small$\bar\ell_2$}
\put(71,14){\small$\bar\ell_{r-1}$}
\put(95,14){\small$\bar\ell_r$}
\put(16,9.5){\small$\delta$}
\put(38,9.5){\small$\delta$}
\put(55,9.5){\small$-\delta$}
\put(63,9.5){\small$-\delta$}
\put(82.5,9.5){\small$\delta$}
\end{overpic}
\vspace{-0.8cm}
\caption{Final step of equalization. Segments $0$ and $r$ form the odd-couple pair. $\bar\ell$ denotes segment lengths after padding widths $\tilde w$ have been added.
Let $\bar\ell_1$ be the only one length that is not equalized, i.e. it differs by some value $2\delta$ from the padded length of its mate.
By adding $\pm\delta$ as \emph{additional} padding widths as depicted (positive left and right of the unequalized segment, otherwise with alternating sign) we achieve the following:
$\bar\ell_1$ increases by $2\delta$ (which equalizes it), partners $\bar\ell_0$ and $\bar\ell_r$ change by the same value ($+\delta$, due to an odd number of holes in between, preserving their equalization),
all other hole segments keep their length (because their additional padding is $\delta-\delta = 0$).}
\label{fig:oddcouple}    
\vspace{-0.2cm}
\end{figure}

\subsubsection*{Fourfold Case}

In this case the cutgraph contains no extra cut path and we have a single component $M'$ with the number of segments $m$ divisible by 4.
As shown above, the matrix $B$ (which has a single block in this case) does not have full rank in this case.
Its upper left $(m-2) \times (m-2)$-submatrix, however, has full rank (as it is a tridiagonal Toeplitz matrix), thus $B^-$, which is $B$ with the last two rows removed, is a (rectangular) matrix with full row rank.
This implies we can obtain padding widths $\bm{w}$ with $B^- \bm{w} = \bm{1}-\bm{\ell}^-$, i.e. they bring all segments \emph{but the last two} to unit length.

In the case of fourfold cones, the transitions across the cutgraph extension $T$ (cf.~Sec.~\ref{sec:mtop}) are rotations by a multiple of $2\pi$ (= identity); the cut extension $T$ can actually be omitted.
This implies that the boundary of the flattening $F$ is formed exclusively by the segments and is entirely rectilinear, as illustrated in Fig.~\ref{fig:poly}.
Without loss of generality, we assume that all even-index segments are laid out horizontal, thus all odd-index segments vertical.
Walking along $\partial F$, we encounter horizontal segments alternatingly in positive and negative $u$-direction, vertical segments alternatingly in positive and negative $v$-direction.
The fact that $\partial F$ is (and after padding remains) a \emph{closed} polygon then implies
$$\sum_{i = 0,\dots,\nicefrac{m}{2}-1} \!\!\!\!\!\!\!-1^i \ell_{2i} = 0, \quad \quad \sum_{i = 0,\dots,\nicefrac{m}{2}-1} \!\!\!\!\!\!\!-1^i \ell_{2i+1} = 0,$$
which implies that if all even/odd segments but one have unit length, the one has unit length as well. Hence, the last two conditions of $B$ are, in the fourfold case, satisfied automatically if all other conditions are satisfied, thus $B^- \bm{w} = \bm{1} - \bm{\ell}^-$ implies $B \bm{w} = \bm{1}-\bm{\ell}$. Of course $\bm{w}$ may contain negative values, but we can add an arbitrary constant shift $\bm{w}^* = \bm{w} +\lambda\bm{1}$ because $B \bm{w}^* = \mu\bm{1} - \bm{\ell}$ (with $\mu = (2\lambda+1)$), thus $A\bm{w}^*=\bm{b}$ and $\bm{w}^*\geq 0$ for sufficiently large $\lambda$.\qed

\begin{figure}[b]
\centering
\begin{overpic}[width=0.99\columnwidth]{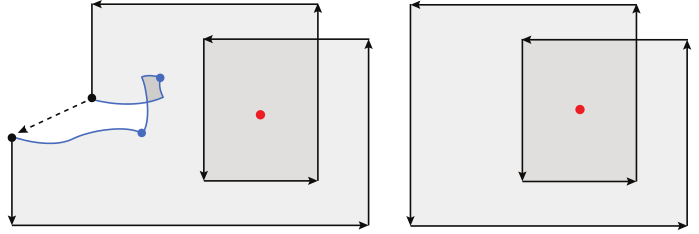}
\put(6,18.0){$\mathbf{d}$}
\put(34,19){\small$-2\pi$}
\put(79.5,19.5){\small$-2\pi$}
\put(20,24.75){\small$-\frac{1}{2}\pi$}
\put(18,10.75){\small$\frac{1}{2}\pi$}

\end{overpic}
\vspace{-0.05cm}
\caption{ Left: boundary $\partial F$ (black) laid out in the plane after cutting to cones (blue). Red indicates a 
cone with $k_i=8$, i.e. curvature $\hat\Theta_i = -2\pi$ (parametric angle $4\pi$) for which a cut is superfluous. 
Right: The segment gap $\mathbf{d}$ vanishes if the cones are fourfold, thus $\partial F$ is a rectilinear polygon.}
\label{fig:poly}    
\end{figure}

\subsection{Genus 1}
\label{app:g1}

The equalization system for the genus 1 cutgraph pattern is:

\vspace{-0.05cm}
\begin{minipage}[c]{0.4\linewidth}
\begin{overpic}[width=0.85\columnwidth]{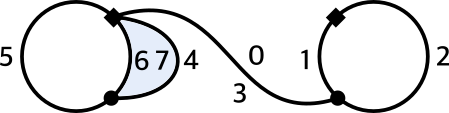}
\end{overpic}
\end{minipage}\quad\;\;
\begin{minipage}[c]{0.4\linewidth}
{\small
\begingroup
\addtolength{\jot}{-0.55em}
\begin{align}
w_{1}+w_{5}-w_{2}-w_{4}&=\ell_{3}-\ell_{0}\notag\\
w_{6}+w_{6}-w_{3}-w_{5}&=\ell_{4}-\ell_{7}\notag\\
w_{0}+w_{2}-w_{7}-w_{7}&=\ell_{6}-\ell_{1}\notag\\
w_{1}+w_{3}-w_{0}-w_{4}&=\ell_{5}-\ell_{2}\notag
\end{align}
\endgroup
}
\end{minipage}
\vspace{-0.2cm}

One can easily verify that the system matrix has full row rank, and that it has positive vectors (e.g. $\bm{1}$) in its kernel, thus has a non-negative solution for any righthand side.

\subsection{Genus 2}
\label{app:g2}
The five different cutgraph patterns covering any possible cone prescription for the genus 2 case are depicted in Fig.~\ref{fig:genus2specialB}.
One can easily verify explicitly that the corresponding system matrices all have full row rank, and that they have positive vectors (e.g. $\bm{1}$) in their kernel, thus have non-negative solutions for any righthand side.

\begin{figure}[h!]
\centering
\vspace{-0.2cm}
\begin{overpic}[width=0.66\columnwidth]{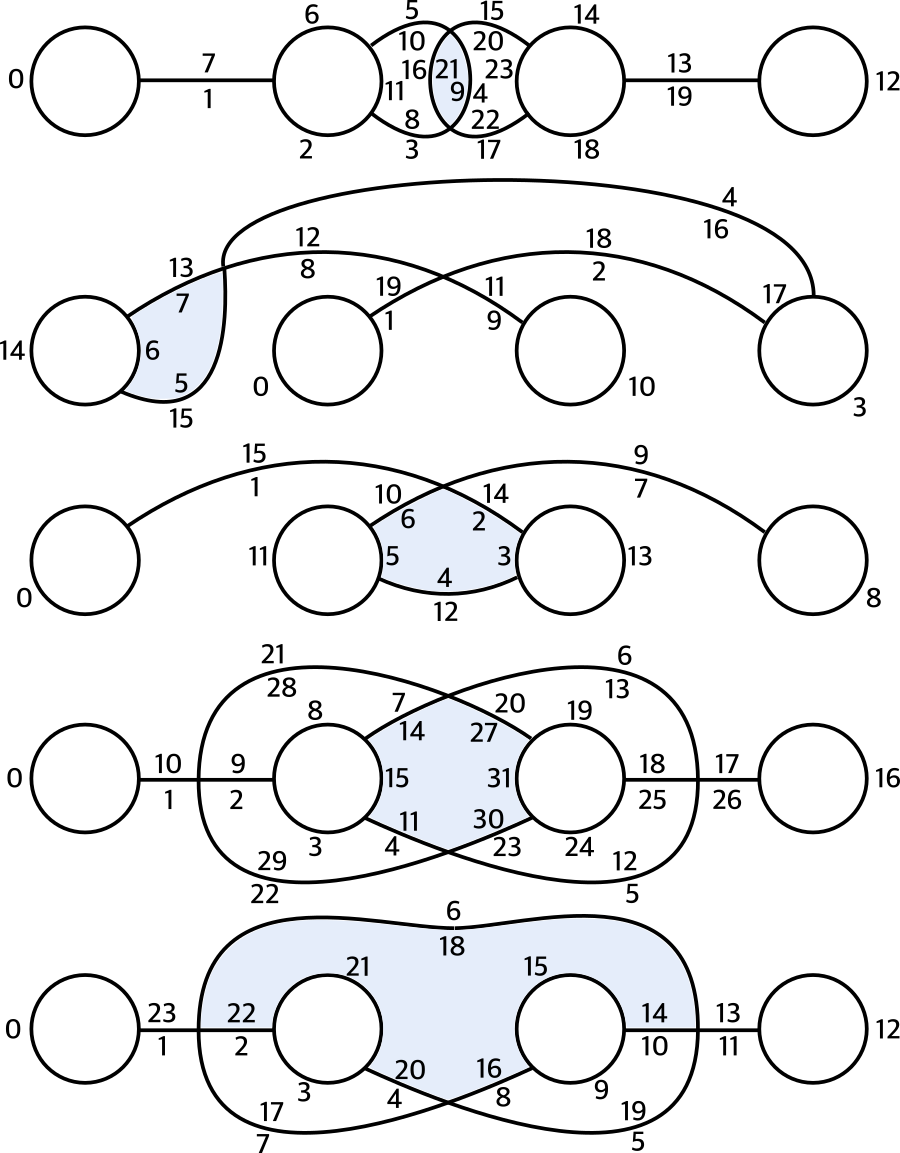}
\end{overpic}
\vspace{-0.25cm}
\caption{Special cutgraph patterns to be used to guarantee equalizability for genus 2 surfaces, depending on whether a subset of cones compatible with a region (shaded) with 2, 3, 5, 6, or 7 corners is present, respectively.}
\label{fig:genus2specialB}    
\vspace{-0.2cm}
\end{figure}  

\hspace{-0.35cm}
\begin{minipage}[t]{0.4\linewidth}
2 corners:
\vspace{-0.1cm}
{\small
\begingroup
\addtolength{\jot}{-0.57em}
\begin{align}
w_{1}+w_{7}-w_{13}-w_{19}&=\ell_{12}-\ell_{0}\notag\\
w_{1}+w_{3}-w_{17}-w_{19}&=\ell_{18}-\ell_{2}\notag\\
w_{5}+w_{7}-w_{13}-w_{15}&=\ell_{14}-\ell_{6}\notag\\
w_{8}+w_{10}-w_{20}-w_{22}&=\ell_{23}-\ell_{11}\notag\\
w_{0}+w_{2}-w_{0}-w_{6}&=\ell_{7}-\ell_{1}\notag\\
w_{2}+w_{17}-w_{11}-w_{16}&=\ell_{8}-\ell_{3}\notag\\
w_{20}+w_{22}-w_{21}-w_{21}&=\ell_{9}-\ell_{4}\notag\\
w_{6}+w_{15}-w_{11}-w_{16}&=\ell_{10}-\ell_{5}\notag\\
w_{12}+w_{14}-w_{12}-w_{18}&=\ell_{19}-\ell_{13}\notag\\
w_{5}+w_{14}-w_{4}-w_{23}&=\ell_{20}-\ell_{15}\notag\\
w_{8}+w_{10}-w_{9}-w_{9}&=\ell_{21}-\ell_{16}\notag\\
w_{3}+w_{18}-w_{4}-w_{23}&=\ell_{22}-\ell_{17}\notag
\end{align}
\endgroup
}
\end{minipage}\quad\;\;
\begin{minipage}[t]{0.4\linewidth}
7 corners:
\vspace{-0.1cm}
{\small
\begingroup
\addtolength{\jot}{-0.57em}
\begin{align}
w_{1}+w_{23}-w_{11}-w_{13}&=\ell_{12}-\ell_{0}\notag\\
w_{2}+w_{4}-w_{8}-w_{10}&=\ell_{9}-\ell_{3}\notag\\
w_{14}+w_{16}-w_{20}-w_{22}&=\ell_{21}-\ell_{15}\notag\\
w_{0}+w_{7}-w_{0}-w_{6}&=\ell_{23}-\ell_{1}\notag\\
w_{3}+w_{17}-w_{18}-w_{21}&=\ell_{22}-\ell_{2}\notag\\
w_{3}+w_{17}-w_{16}-w_{21}&=\ell_{20}-\ell_{4}\notag\\
w_{7}+w_{11}-w_{8}-w_{10}&=\ell_{19}-\ell_{5}\notag\\
w_{13}+w_{23}-w_{14}-w_{22}&=\ell_{18}-\ell_{6}\notag\\
w_{1}+w_{5}-w_{2}-w_{4}&=\ell_{17}-\ell_{7}\notag\\
w_{9}+w_{19}-w_{15}-w_{20}&=\ell_{16}-\ell_{8}\notag\\
w_{9}+w_{19}-w_{15}-w_{18}&=\ell_{14}-\ell_{10}\notag\\
w_{5}+w_{12}-w_{6}-w_{12}&=\ell_{13}-\ell_{11}\notag
\end{align}
\endgroup
}
\end{minipage}

\hspace{-0.35cm}
\begin{minipage}[t]{0.4\linewidth}
\vspace{-0.25cm}
3 corners:
\vspace{-0.1cm}
{\small
\begingroup
\addtolength{\jot}{-0.57em}
\begin{align}
w_{1}+w_{19}-w_{9}-w_{11}&=\ell_{10}-\ell_{0}\notag\\
w_{5}+w_{7}-w_{16}-w_{18}&=\ell_{17}-\ell_{6}\notag\\
w_{2}+w_{4}-w_{13}-w_{15}&=\ell_{14}-\ell_{3}\notag\\
w_{0}+w_{9}-w_{0}-w_{8}&=\ell_{19}-\ell_{1}\notag\\
w_{3}+w_{11}-w_{12}-w_{17}&=\ell_{18}-\ell_{2}\notag\\
w_{3}+w_{13}-w_{12}-w_{17}&=\ell_{16}-\ell_{4}\notag\\
w_{6}+w_{7}-w_{8}-w_{14}&=\ell_{15}-\ell_{5}\notag\\
w_{5}+w_{6}-w_{4}-w_{14}&=\ell_{13}-\ell_{7}\notag\\
w_{15}+w_{19}-w_{16}-w_{18}&=\ell_{12}-\ell_{8}\notag\\
w_{1}+w_{10}-w_{2}-w_{10}&=\ell_{11}-\ell_{9}\notag
\end{align}
\endgroup
}

\vspace{-0.25cm}
5 corners:
\vspace{-0.1cm}
{\small
\begingroup
\addtolength{\jot}{-0.57em}
\begin{align}
w_{1}+w_{15}-w_{7}-w_{9}&=\ell_{8}-\ell_{0}\notag\\
w_{2}+w_{4}-w_{4}-w_{6}&=\ell_{5}-\ell_{3}\notag\\
w_{10}+w_{12}-w_{12}-w_{14}&=\ell_{13}-\ell_{11}\notag\\
w_{0}+w_{10}-w_{0}-w_{9}&=\ell_{15}-\ell_{1}\notag\\
w_{3}+w_{6}-w_{7}-w_{13}&=\ell_{14}-\ell_{2}\notag\\
w_{3}+w_{5}-w_{11}-w_{13}&=\ell_{12}-\ell_{4}\notag\\
w_{2}+w_{5}-w_{1}-w_{11}&=\ell_{10}-\ell_{6}\notag\\
w_{8}+w_{14}-w_{8}-w_{15}&=\ell_{9}-\ell_{7}\notag
\end{align}
\vspace{-0.35cm}
\endgroup
}
\end{minipage}\quad\;\;
\begin{minipage}[t]{0.4\linewidth}
\vspace{-0.25cm}
6 corners:
\vspace{-0.1cm}
{\small
\begingroup
\addtolength{\jot}{-0.57em}
\begin{align}
w_{1}+w_{10}-w_{17}-w_{26}&=\ell_{16}-\ell_{0}\notag\\
w_{2}+w_{4}-w_{23}-w_{25}&=\ell_{24}-\ell_{3}\notag\\
w_{7}+w_{9}-w_{18}-w_{20}&=\ell_{19}-\ell_{8}\notag\\
w_{11}+w_{14}-w_{27}-w_{30}&=\ell_{31}-\ell_{15}\notag\\
w_{0}+w_{22}-w_{0}-w_{21}&=\ell_{10}-\ell_{1}\notag\\
w_{3}+w_{29}-w_{8}-w_{28}&=\ell_{9}-\ell_{2}\notag\\
w_{3}+w_{29}-w_{15}-w_{30}&=\ell_{11}-\ell_{4}\notag\\
w_{22}+w_{26}-w_{23}-w_{25}&=\ell_{12}-\ell_{5}\notag\\
w_{17}+w_{21}-w_{18}-w_{20}&=\ell_{13}-\ell_{6}\notag\\
w_{8}+w_{28}-w_{15}-w_{27}&=\ell_{14}-\ell_{7}\notag\\
w_{6}+w_{16}-w_{5}-w_{16}&=\ell_{26}-\ell_{17}\notag\\
w_{13}+w_{19}-w_{12}-w_{24}&=\ell_{25}-\ell_{18}\notag\\
w_{13}+w_{19}-w_{14}-w_{31}&=\ell_{27}-\ell_{20}\notag\\
w_{6}+w_{10}-w_{7}-w_{9}&=\ell_{28}-\ell_{21}\notag\\
w_{1}+w_{5}-w_{2}-w_{4}&=\ell_{29}-\ell_{22}\notag\\
w_{12}+w_{24}-w_{11}-w_{31}&=\ell_{30}-\ell_{23}\notag
\end{align}
\endgroup
}
\end{minipage}

\section{Cone Metric Existence: Proof of Proposition~\ref{prop:conemetric}}
\label{sec:troyaproof}

 \cite{Troyanov:1991} presents a general proof of cone metric existence on closed surfaces. It ``extends [\dots] to surfaces with (piecewise geodesic) boundary'', but the piecewise geodesic boundary case, which is the case relevant for our cone metric with rectilinear boundary, is not explicitly spelled out in detail. \cite{Cherrier:1984} focuses on the case with boundary, but does not specifically consider the here relevant delta distributions of curvature. We thus provide a proof tailored to this special case.

Let $M'$ be one of the disk-topology connected components of the cut surface (we will drop the index of the component in the following).  
Consider the expansion $M'_{\text{exp}}$ of  $M'$, obtained by joining a copy of the geodesic disk of size $\epsilon_p$ in $M$ centered at $\pi(p)$, to each boundary point of $M'$.
Multiple $p \in \partial M'$ corresponding to the same $\pi(p)$ get separate copies of the disk centered at $\pi(p)$ and $\epsilon_p$ is chosen sufficiently small for each $p$ so that $M'_\text{exp}$ still has disk-topology. 

To simplify the exposition, we assume that on the surface $M$ the branches of the cut form right angles -- the proof can be extended to arbitrary angles, as long as the curves are transversal, but requires a more complex
solution $\phi_1$ below, with additional cones at the corners, as explained in more detail in \cite{Bunin:2008}. 

Consider a conformal map $f$ from $M'_{\text{exp}}$  to the plane (e.g. to a disk).
As $M'_{\text{exp}}$ has disk-topology, such a map always exists. As $M'$ is in the interior of $M'_{\text{exp}}$ the
map is conformal at the points of the boundary $\partial M'$. The conformal scale factor $|f'|$, where $f'$ is the complex derivative of the map expressed in local complex coordinates on the tangent plane, defines the conformal metric on $\text{Int}(f(M'_{\text{exp}}))$, in particular, on all of $f(M') = M''$ including
the boundary. Let $\gamma_i$, $i=1\ldots m$, be the curves of the boundary of $M''$; these curves are smooth, as the boundary of $M'$ is smooth, and meet at right angles. 

We now construct on  $M''$ a metric with the desired properties; then the metric on $M'$ is obtained by a pullback through $f$. As $M''$ is flat, the equation for the metric in the interior points $x$ of $M$ simplifies to
$$\Delta \phi = \sum_j   \hat\Theta_j \delta(f(p_j)-x),$$
where $\hat\Theta_j$ is the target curvature at cone $c_j = (p_j, \hat\Theta_j)$.
If the geodesic curvature at non-corner boundary points is given by a smooth function $\kappa$, then we have the Neumann boundary condition
$$\frac{\partial \phi}{\partial n} = -\kappa,$$
that needs to be satisfied to obtain straight boundary edges in the final metric.  Note that $\kappa$ may be  discontinuous at the corner points but it is still in $L_2$.  
We can find a particular solution $u_1$ satisfying the Poisson equation on $M''$  without boundary conditions directly as $\phi_1 = \sum_j \hat\Theta_j \ln(|z-f(p_j)|)$ with singularities at $f(p_j)$. 

Then we solve the Laplace equation  $\Delta \phi_2 = 0$, for $\phi_2$ with smooth Neumann conditions $\nicefrac{\partial \phi_2}{\partial n} = \kappa -\nicefrac{\partial\phi_1}{\partial n}$. For this problem to have a solution, the Neumann boundary condition needs to integrate to zero
over the boundary. Observe that because the domain $M''$ is flat, the integral of the geodesic curvature
$\kappa$ over the boundary, with the sum of corner angles $n \frac{\pi}{2}$ added, must be $2\pi$, i.e.,
$\int_{\partial \Omega} \kappa ds = 2\pi -n\frac{\pi}{2}$.  In addition, $\int_{\partial\Omega} \nicefrac{\partial \phi_2}{\partial n} ds
= \sum_j \hat\Theta_j$, by the Gauss theorem.  
Finally, note that by the cutgraph admissibility assumption
on the number of corners, $2\pi - n\frac{\pi}{2}  - \sum_j \hat\Theta_j = 0$, i.e., the integral condition for the Neumann
problem is satisfied.   Therefore, the problem has a unique, up to a constant, solution. This solution is in $H^2$ (and, by Sobolev Lemma,  $C^0$solution up to the boundary) for domains with piecewise smooth boundary and convex corners between curves (cf. \cite[p.~174]{grisvard1985elliptic}). The sum $\phi = \phi_1+\phi_2$ satisfies the Poisson equation and boundary conditions.  The metric $\phi$ is nonsingular at the boundary, therefore it is conformal,  and the angles between boundary curves are preserved. We conclude that the pullback of this metric to $M'$ is the needed metric.

\section{Map Padding}
\label{sec:padding}

As laid out in Sec.~\ref{sec:pad1}, map padding consists of the application of stretch maps to rectangular regions, and lateral shifts within these. We define these operations precisely in the following.

To this end, in the plane with Euclidean coordinates $u,v$, we, w.l.o.g., consider the case of a horizontal segment $s_j$ (aligned with the $u$-axis) to be padded by $w_j$ in upward (positive $v$) direction, as illustrated in Fig.~\ref{fig:padrect} -- the other cases (horizontal downward, vertical left and right) are handled analogously.

In the case of a segment split by $T$, we assume that $T$ (which can be chosen freely) meets the segment at a right angle with a straight cut in the parametric domain. Then both parts can simply be treated separately using the following operations without any special case handling -- except for the same rectangle thickness being used for both parts.

\paragraph{Stretching}
Let $\tau_j$ be the thickness (here: the height) of $R_j$, and $(u_{j \min}, v_{j \min})$ the coordinates of the lower left corner of $R_j$. The map $g_i$ applied to the strip to perform the stretching is a simple one-dimensional scaling by factor $o_j = \frac{w_j+\tau_j}{\tau_j}$:
\begin{equation}
g_j : (u,v) \mapsto (u, v_{j\min} + o_j(v-v_{j \min}))
\end{equation}

\paragraph{Shifting}

We apply a deformation (lateral shifting) within a rectangle $R_j^p$ that leads to a (piecewise) constant speed parametrization of the segment $s_j$.
We use a simple blend (linear in $v$) between the map $\phi_j: [u_{j\min},u_{j\max}] \rightarrow [u_{j\min},u_{j\max}]$ that reparametrizes segment $s_j$ to (piecewise) constant speed (applied at the top of the strip) and the identity map $u\mapsto u$ (applied at the bottom):
\begin{equation}
r_j : (u,v) \mapsto (t \phi_j(u) + (1- t) u, v),
\end{equation}

where $t = (v-v_{j\min})/\tau_j$ is the normalized relative $v$-coordinate within $R_j^p$.
$\phi_j$ is a constant speed reparametrizaton for simple segments. For complex segments it is with piecewise constant speed, constant per boundary curve the segment consists of, such that the lengths of these boundary curves after reparametrization are in the same ratio as the padded lengths of their mates.

One easily verifies that this deformation $r_j$ is injective: the determinant of its Jacobian is $\det J(u,v) = \left(\nicefrac{\partial \phi_j}{\partial u}(u) - 1\right)t +1$, and due to $0 \leq t \leq 1$ and $\nicefrac{\partial \phi_j}{\partial u}(u) > 0$ (as the scaled arc-length reparametrization is non-degenerate and orientation preserving) it is always positive.

\begin{proof}[Proof of Proposition~\ref{prop:rotseam2}]
 $F$ is rotationally seamless, in particular locally injective and continuous (on $M^c$). If $F^{p,m}$ is continuous, so is $f^{(p,m+1)}$ because $g_{m+1}$ is continuous and it is identity on the interface between $S_{m+1}$ and the rest of $M^c$.
If $F^{s,m}$ is continuous, so is $F^{(s,m+1)}$ because $r_{m+1}$ is continuous and it is identity on the interface between $S_{m+1}$ and the rest of $M^c$.
It follows that $F^s$ is continuous.
Analogously, as $g_j$ and $r_j$ are injective, local injectivity is preserved for $F^s$.
Both types of maps,  $g_j$ and $r_j$, preserve the straightness and the orientation of all segments and therefore the pairwise angles between them, thus $F^s$ is rotationally seamless like $F$.
As angles between boundary curve images are not affected, cone angles are preserved as well.
Each boundary curve which segment $s_j$ consists of is parametrized with constant speed by $F^{s,j}$ by construction. As $s_k$ with $k \neq j$ is identity on $s_j$ (more precisely: that part of $s_j$ contained in $R_k$ and thus potentially affected by $s_k$), $F^s(s_j) = F^{s,j}(s_j)$.
\end{proof}

\section{Illustrative Example }
\label{sec:illexample}

We consider the simplest example: a torus with two cones, $k_0 = 2$, $k_1 = 6$, i.e. cone angles
 $\pi$ and $3\pi$, as depicted in Figure~\ref{fig:torus-example}.

\begin{figure}[h]
\centering
\begin{overpic}[width=0.99\columnwidth]{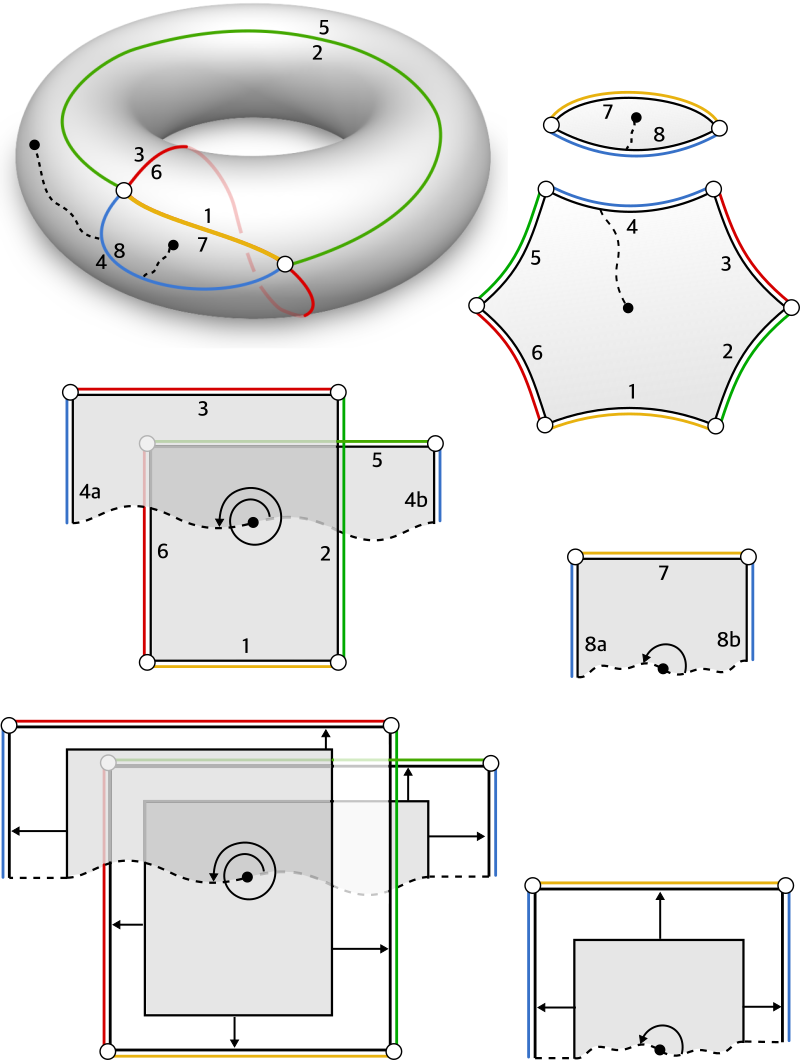}
\put(2,22.5){\small$w_4$}
\put(41.6,22.1){\small$w_4$}
\put(32.6,11.5){\small$w_2$}
\put(22.6,2.2){\small$w_1$}
\put(10.5,13.8){\small$w_6$}
\put(39.0,25.5){\small$w_5$}
\put(31.4,29.5){\small$w_3$}

\put(62.7,13.3){\small$w_7$}
\put(50.8,6.1){\small$w_8$}
\put(70.4,6.2){\small$w_8$}

\end{overpic}
\vspace{-0.05cm}
\caption{
Top left: genus 1 surface with cutgraph consisting of 4 branches (yellow, green, red, blue).
The cutgraph cuts the surface into two components with 2 and 6 corners, respectively, i.e. with a total of 8 boundary segments (two corresponding to each branch).
Top right: schematic depiction of the two components under a cone metric with rectilinear boundary consisting of straight segments (here shown schematically as curved arcs) meeting at right angles. 
Middle left/right: planar flattening of the two components implied by the metric (after cutting to cones -- dashed).
The numbering of segments is used to set up the system for padding widths $w_i$.
Bottom left/right: the padded flattening (padding, indicated by arrows, in white).}
\label{fig:torus-example}    
\vspace{-0.25cm}
\end{figure}  
 
 \paragraph{Cutgraph} We cut the surface into a 2-corner and a 6-corner component (cf.~Fig.~\ref{fig:torus-example} top).
 The cutgraph was embedded in the surface in such a way that the total cone curvature contained in each component is compatible with the number of corners in terms of Gauss-Bonnet: a cone with $k_0=2$ lies in the 2-corner region, a cone with $k_1=6$ in the 6-corner region.

\paragraph{Cone Metric}
We compute 
 a cone metric on each of the two components (e.g. conformal, given by a pointwise scale factor) which is flat everywhere except at the cones, where it has prescribed curvature.
  In addition, we require the boundary to be geodesically straight
 at all boundary points except for the corners, where it forms right angles under the metric.  
  
\paragraph{Metric to Parametrization}
If we add further cuts connecting all cones to the boundary (indicated with dashed curves) this cone metric is flat in the entire interior and thus yields a flattening, a global parametrization of the torus, with two charts (cf.~Fig.~\ref{fig:torus-example} middle left and right).
The image of each of the two maps is a domain with (away from the dashed cone cut) rectilinear boundary: straight segments meeting at right angles. 
Consequently, as the angle between any two segments is some integer multiple of $\nicefrac{\pi}{2}$, this parametrization is \emph{rotationally} seamless, but it may have a jump in scale across cuts. In particular, two segments corresponding to the same cutgraph branch -- here (1,7), (2,5), (3,6), and (4,8) -- may have different lengths in general.

\paragraph{Equalization by Padding}
In order to equalize the lengths of identified pairs of segments (and thereby enable the parametrization to ultimately become seamless), we add \emph{padding}, i.e. we extend the parametric domain by shifting straight segments in orthogonal direction (cf.~Fig.~\ref{fig:torus-example} bottom).
For each segment $i$, numbered sequentially around each component, $\ell_i$, $i= 1 \ldots 8$, is its parametric length. 
  For a segment $i$, after padding its length becomes $\ell_i + w_{\text{prev}(i)} + w_{\text{next}(i)}$, where 
$\text{prev}(i)$ and $\text{next}(i)$ are previous and next segment indices around the component, and $w_j$ is the padding width for segment~$j$.  Equating the post-padding lengths of all four pairs of identified segments yields the following four equations in this example:
\begin{equation}
\begin{split}
&\ell_1 + w_2 + w_6 = \ell_7 + 2 w_8;\quad \ell_2 + w_1 + w_3 = \ell_5 + w_4 + w_6;\\
&\ell_3 + w_2 + w_4 = \ell_6 + w_1 + w_5;\quad \ell_4 + w_3 + w_5 = \ell_8 + 2w_7
\end{split}
\nonumber
\label{eq:torus26}
\end{equation}
where $\ell_i$ are the known segment lengths, $w_i$ are the unknown padding widths. 
The matrix of this equation system has the form
\begin{equation}
\nonumber
A = \left[\begin{array}{rrrrrrrr} 
 0 &1 &0 &0  &0  &1  &0  &-2\\
 1 &0 &1 &-1 &0  &-1 &0  &0\\
-1 &1 &0 &1  &-1 &0  &0  &0\\ 
 0 &0 &1 &0  &1  &0  &-2 &0\\
\end{array}\right]
\end{equation}
and the right-hand side is $\bm{b} = [\ell_7-\ell_1,\, \ell_5-\ell_2,\, \ell_6-\ell_3,\, \ell_8-\ell_4]^T$, i.e., parametric length mismatches of identified segments.  
To  \emph{equalize} segments lengths, we need to find a solution of the system $A\bm{w} = \bm{b}$, where $\bm{w}$ is the vector of padding 
widths, such that $\bm{w}\geq 0$. This non-negativity condition is important to guarantee that the domain does not degenerate through padding. 
Observe that $A$ has full (row) rank, thus admits a solution. Observe further that $A \mathbf{1} = 0$ in this case, i.e., after computing an arbitrary solution, we can obtain a non-negative solution by adding a sufficiently large constant.
More generally, note
that $A$ (thus its rank and its nullspace) is determined solely by the choice of cutgraph combinatorics. 
For instance, without the blue or without the yellow branch, the cutgraph (cutting the surface to a single topological disk in these cases) would yield a system that does \emph{not} have a non-negative solution $\bm{w}$ for every possible $\bm{b}$.

\paragraph{Seamless Parametrization} Once the padded domain is obtained, we remap the original image onto this domain. This is done by stretching outwards thin strips running along the segments to cover the added space in the rectangular regions padded onto the domain, yielding a seamless global parametrization.

\end{document}